\documentclass[twocolumn]{autart}    

\usepackage{graphicx}          
\usepackage[svgnames,dvipsnames]{xcolor}

\usepackage{amssymb,amsmath}

\usepackage{url}

\usepackage[figuresright]{rotating}
\newtheorem{theorem}{Theorem}[section]

\newtheorem{proposition}{Proposition}[theorem]
\newtheorem{lemma}[theorem]{Lemma}
\newtheorem{example}[theorem]{Example}

\usepackage[figuresright]{rotating}

\usepackage[sort]{natbib}

\begin{document}

\begin{frontmatter}

\title{A modular approach to achieve multistationarity using AND-gates} 

\thanks[footnoteinfo]{A.VC. received the following funding. Grant for PUI institutions, American Mathematical Society and the Simons Foundation. NSF Emergent Math in Biology grant (number 2424634)}

\author[Dayton]{Alan Veliz-Cuba}\ead{avelizcuba1@udayton.edu},    
\author[Dayton]{Zeyu Wang}

\address[Dayton]{University of Dayton, Dayton, Ohio, USA}  

\begin{keyword}                           
Multistationarity,  
Multistability, 
Conjunctive networks, 
AND-gates,
Network design,
Modularity         
\end{keyword}                             

\begin{abstract}                          
Systems of differential equations have been used to model biological systems such as gene and neural networks. A problem of particular interest is to understand the number of stable steady states. 
Here we propose conjunctive networks (systems of differential equations equations created using AND gates) to achieve any desired number of stable steady states. Our approach uses combinatorial tools to  predict the number of stable steady states from the structure of the wiring diagram. 
Furthermore, AND gates have been successfully engineered by experimentalists for gene networks, so our results provide a modular approach to design gene networks that achieve arbitrary number of phenotypes. 

\end{abstract}

\end{frontmatter}

\section{Introduction}\label{sec:Intro}


Understanding the stable steady states of systems of ordinary differential equations (ODEs) is a problem that arises when studying different biological problems such as bistability, cell types, multistationarity, memory formation. A particular problem of interest is to be able to predict the number of stable steady states efficiently and to be able to create an ODE given a certain number of desired stable steady states.

A typical system of differential equations used in modeling has the form

\[\frac{dx_i}{dt}=F_i(\mathbf{x}) - \beta x_i,\]

where $i=1,\ldots,N,$ $\mathbf{x}=(x_1,\ldots,x_n)$,  and $x_i$ models the concentration or activity level of some chemical/gene/neuron. The parameter $\beta$ corresponds to natural decay or dilution, and the function $F_i$ encodes the way in which $x_i$ depends on all other variables. The function $F_i$ is typically constructed using nonlinear functions such as the Hill function, 

\[H(z)=\frac{z^n}{\theta^n+z^n},\]

 or an algebraic combination of Hill functions. Here $\theta$ is the activation threshold and $n$ is the Hill coefficient that determines how steep the activation occurs around the threshold (Fig.~\ref{fig:Hillfunction}). 

\begin{figure}[h]
\centerline{ \hbox{ \includegraphics[width=7.5cm]{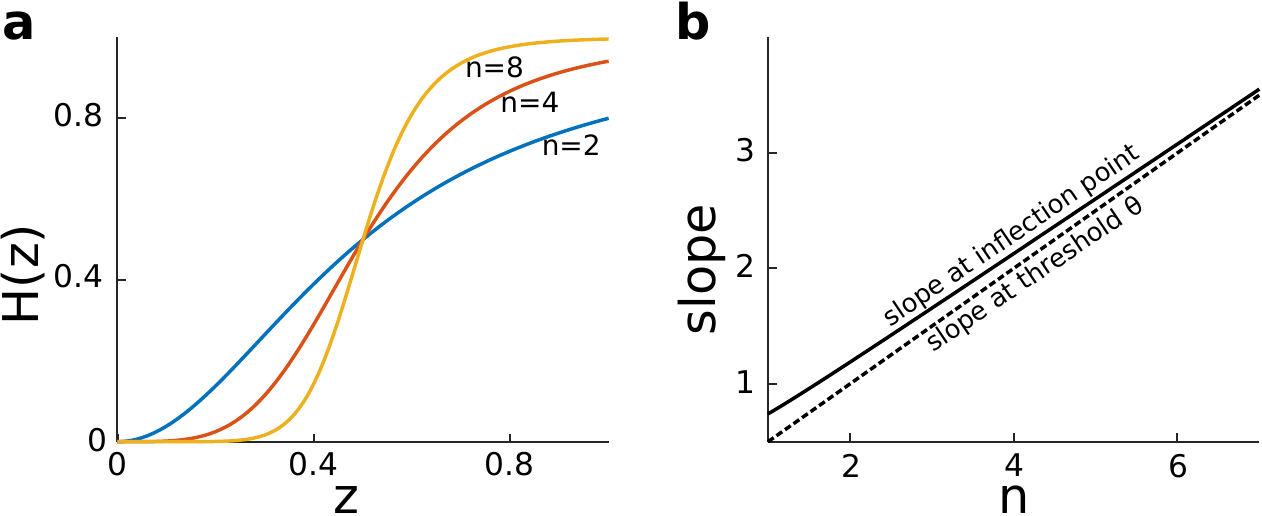}
}}\vspace{-7pt}
 \caption{
Hill function. 
(\textbf{a}) Plots of Hill functions for different values of the Hill coefficient ($\theta=1/2$). 
(\textbf{b}) Value of the slope as a function of the Hill coefficient ($\theta=1/2$). Gray: value of the slope at the threshold $\theta$; black: value of the slope at the inflection point, $\theta \sqrt[n]{\frac{n-1}{n+1}}$. Note that as $n$ increases, the inflection point approaches $\theta$ and the corresponding slopes become indistinguishable.
}
 \label{fig:Hillfunction}
\end{figure}

In this manuscript we focus on the case where the functions $F_i$ are constructed using AND gates. Intuitively, an AND gate consists of a function where all the inputs have to be large for the output to be high (Fig.~\ref{fig:ANDgate}). There are several cases where biological regulation behaves like or can be approximated  by an AND gate \cite{VelizStigler:lacop,Veliz:AND_NOT_networks,doi:10.1126/sciadv.adj0822} and furthermore, AND gates have been engineered in laboratories \cite{Shis26032013,C4CC10047F}. Also, Boolean networks constructed with AND gates have been studied in the past \cite{CBN,MaxNet,Gao7963728,GAO20188,WEISS201856}, and the technique of using the network structure to predict dynamical features may potentially be applicable to their continuous counterpart.
To model AND gates we use products of Hill functions, 

\[
\displaystyle F_i(\mathbf{x})=\alpha \prod_{k\in I_i} \frac{x_k^n}{\theta^n+x_k^n},
\]
where $\alpha$ is the maximum activation and $I_i$ denotes the set of variables that affect variable $x_i$. 


\begin{figure}[h]
\centerline{ \hbox{ 
 \includegraphics[width=8cm]{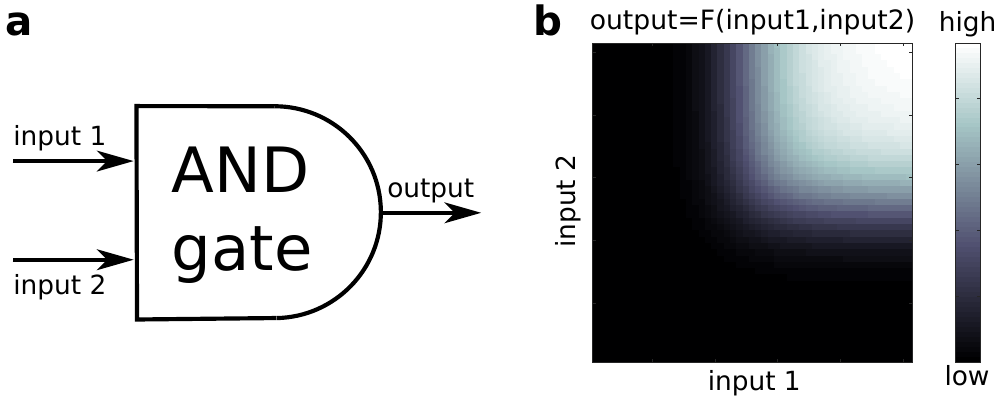}
}}\vspace{-7pt}
 \caption{
Schematic of an AND gate. 
(\textbf{a}) Circuit representation of an AND gate in two variables. The output will be high when both inputs are high. 
(\textbf{b}) Qualitative behavior of an AND gate. The heatmap shows that if either of the inputs is below a threshold, then the output takes a low value. If both inputs are above a threshold, then the output takes a high value. Note the nonlinearity of function $F$.
}
 \label{fig:ANDgate}
\end{figure}

To make the presentation more transparent, we will assume that all variables have the same activation threshold $\theta=1/2$, as well as the same Hill coefficient, $n$. Furthermore, rescaling time and $x_i$, we can assume without loss of generality that $\alpha=\beta=1$. Note that with these assumptions, the set $[0,1]^N$ is invariant. We call these systems of ordinary differential equations \emph{conjunctive networks}. 

{
We now summarize our main results. In Theorem \ref{thm:sc} we show that if the wiring diagram is strongly connected then the network has exactly two steady states. In Theorem \ref{thm:antichain} we generalize the previous theorem and give the exact number of stable steady states of conjunctive networks with arbitrary wiring diagram. In Theorem \ref{thm:exist_conj_net} we show a constructive way to design networks with arbitrary multistability and in Theorem \ref{thm:factors} we improve the construction by showing how to reduce the number of variables needed. Proposition \ref{prop:basin} and Theorem \ref{thm:global} in essence state that the basin of attractions for the steady states are not small and that all trajectories converge to one of the steady states predicted by our approach.
}

\section{Conjunctive ODEs}\label{sec:conjODEs}
A conjunctive network is a system of differential equations of the form 
\begin{equation}\label{eq:conj}
\frac{dx_i}{dt}=\prod_{k\in I_i} \frac{x_k^n}{\theta^n+x_k^n} - x_i,
\end{equation}
where $I_i\subseteq \{1,\ldots,N\}$ for each $i=1,\ldots,N$.

Given a conjunctive network, we construct a directed graph with vertices $i\in\{1,\ldots,N\}$ (or $x_i$) and edges $j\rightarrow i$ if and only if $j\in I_i$. We call this directed graph, \emph{wiring diagram} or \emph{dependency graph}. The wiring diagram describes the dependency between variables and its structure will be key in determining the number of stable steady states. In the context of modularity \cite{wheeler2024modular,kadelka2023modularity}, a conjunctive network can be seen as a network made up of simple modules (AND gates), and the way we connect these modules can potentially permit the network achieve complex dynamics. 
The sigmoidality of the regulation functions is governed by the Hill coefficient $n$.

\textbf{Remark.}
Unlike studies that rely on the idealized piecewise-linear limit ($n\rightarrow \infty$), we analyze the system for finite $n$. Specifically, we establish that our results hold for any $n\geq n_0$, where the existence of this bound is guaranteed by \cite{Veliz:BNODE2012,veliz2014piecewise}. 
To maintain conciseness, we adopt the phrase ``for large but finite $n$'' to refer specifically to this regime $n\geq n_0$. In particular, results about conjunctive networks for discrete systems are not guaranteed a priori.

{
The stationary points of interest in this manuscript are \emph{asymptotically stable steady states}, that is, steady states $\mathbf{x}$ such that (1) for every $\epsilon>0$ there is $\delta>0$ such that for every solution $\mathbf{z}(t)$, $|\mathbf{z}(0)-\mathbf{x}|<\delta$ guarantees $|\mathbf{z}(t)-\mathbf{x}|<\epsilon$ for all $t$, and (2)  every trajectory that starts sufficiently close to $\mathbf{x}$ will converge to $\mathbf{x}$. For brevity, in the rest of the manuscript we refer to asymptotically stable steady states simply as  \emph{stable steady states} unless indicated otherwise. 
}

Also, if all entries of a steady state are positive, we say that the steady state is positive. 
When modeling biological networks, it is common to assume that the interactions are sigmoidal enough so that they behave qualitatively like Boolean gates and the behavior of the network depends on the qualitative features of the nonlinearities instead of the actual parameters \cite{MR2069236,Albert2010,Davidich:2008uq,Veliz:BNODE2012,VelizStigler:lacop,DavBor,Kauff2,veliz2014piecewise}.

\begin{example}\label{eg:1d}
Consider the 1-dimensional conjunctive network given by 
\[\frac{dx_1}{dt}=\frac{x_1^n}{\theta^n+x_1^n}-x_1.\] Its wiring diagram is shown in Fig.~\ref{fig:eg_1d}. For large but finite $n$, there are two stable steady states: $x_1=0$ and a positive stable steady state, Fig.~\ref{fig:eg_1d}b. Furthermore, as $n$ increases, the positive stable steady state approaches 1, Fig.~\ref{fig:eg_1d}c.
\end{example}

\begin{figure}[h]
\centerline{ \hbox{ 
 \includegraphics[width=8.3cm]{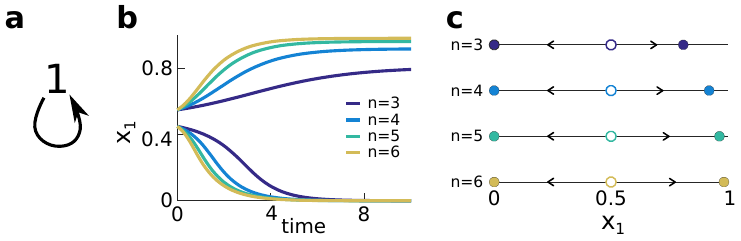}
}}\vspace{-7pt}
 \caption{
Dynamics of conjunctive  network in Example~\ref{eg:1d}.
(\textbf{a}) 
Wiring diagram.
(\textbf{b}) 
Behavior of solutions for different values of $n$. For initial conditions starting below 0.5, solutions converge to 0 and for initial conditions above 0.5, solutions converge to a positive stable steady state.
(\textbf{c})
Phase portrait of solutions for different values of $n$. The filled circles mark the stable steady states from panel (\textbf{b}) and the open circle denotes the unstable steady state. The positive stable steady state approaches 1 as $n$ increases. 
}
 \label{fig:eg_1d}
\end{figure}

\begin{example}\label{eg:2d}
Consider the 2-dimensional conjunctive network given by 
\begin{align*}
\frac{dx_1}{dt} & =\frac{x_1^n}{\theta^n+x_1^n}\frac{x_2^n}{\theta^n+x_2^n}-x_1,\\
\frac{dx_2}{dt} & =\frac{x_1^n}{\theta^n+x_1^n}-x_2.
\end{align*}
Its wiring diagram is shown in Fig.~\ref{fig:eg_2d}a. For large but finite $n$, there are two stable steady states: (0,0) and a positive stable steady state, Fig.~\ref{fig:eg_2d}b. Furthermore, as $n$ increases, the positive stable steady state approaches (1,1), Fig.~\ref{fig:eg_2d}c.
\end{example}

\begin{figure}[h]
\centerline{ \hbox{ 
 \includegraphics[width=8.3cm]{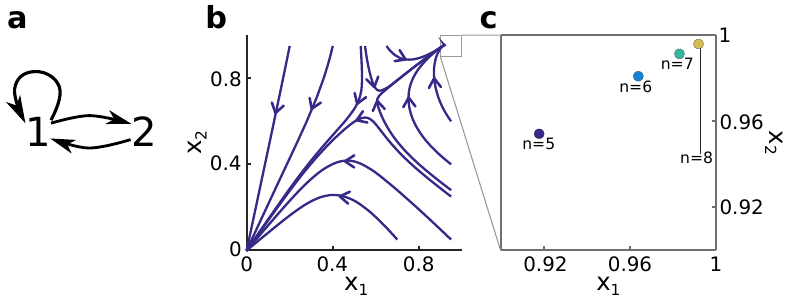}
}}\vspace{-7pt}
 \caption{
Dynamics of conjunctive  network in Example~\ref{eg:2d}.
(\textbf{a}) 
Wiring diagram.
(\textbf{b}) 
Phase portrait of conjunctive network for $n=5$. We can see that (up to a set of measure zero) all solutions converge to (0,0) or to the positive stable steady state. 
(\textbf{c})
Positive stable steady state for different values of $n$. Note that the positive stable steady state approaches (1,1) as $n$ increases. 
}
 \label{fig:eg_2d}
\end{figure}

\begin{example}\label{eg:3d}
Consider the 3-dimensional conjunctive network given by 
\begin{align*}
\frac{dx_1}{dt} & =\frac{x_2^n}{\theta^n+x_2^n}-x_1,
\\
\frac{dx_2}{dt} & =\frac{x_1^n}{\theta^n+x_1^n}-x_2,
\\
\frac{dx_3}{dt} & =\frac{x_1^n}{\theta^n+x_1^n}\frac{x_2^n}{\theta^n+x_2^n}\frac{x_3^n}{\theta^n+x_3^n}-x_3.
\end{align*}
Its wiring diagram is shown in Fig.~\ref{fig:eg_3d}a. For large but finite $n$, there are 3 stable steady states: (0,0,0), a stable steady state of the form $(x_1^*,x_2^*,0)$ where $x_1^*,x_2^*>0$, and a positive stable steady state, $(x_1^*,x_2^*,x_3^*)$ (Fig.~\ref{fig:eg_3d}b). Furthermore, as $n$ increases, the positive entries of these stable steady states approach 1 (Fig.~\ref{fig:eg_3d}c).
\end{example}

\begin{figure}[h]
\centerline{ \hbox{ 
 \includegraphics[width=8.3cm]{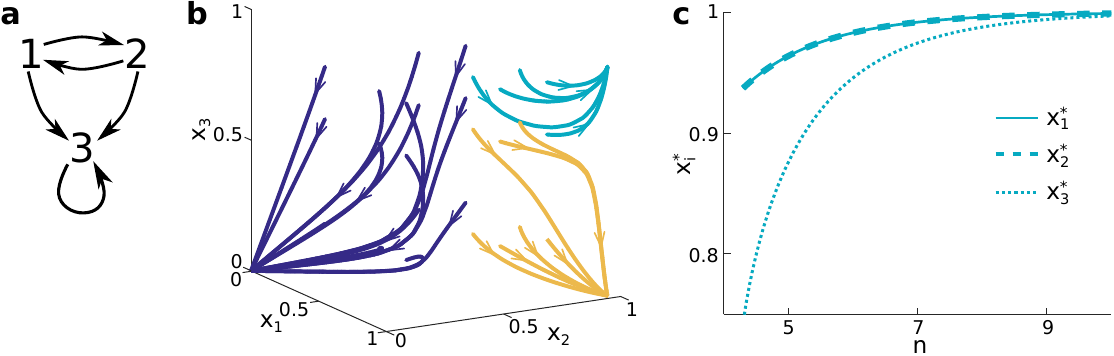}
}}\vspace{-7pt}
 \caption{
Dynamics of conjunctive  network in Example~\ref{eg:3d}.
(\textbf{a}) 
Wiring diagram.
(\textbf{b}) 
Phase portrait of conjunctive network for $n=5$. We can see that (up to a set of measure zero) all solutions converge to (0,0,0), to a positive stable steady state, or a stable steady state of the form $(x_1^*,x_2^*,0)$. 
(\textbf{c})
Values of the entries of the positive stable steady state as a function of $n$. The positive stable steady state converges to (1,1,1) as $n$ increases. The first two entries of the stable steady state of the form $(x_1^*,x_2^*,0)$ are identical to that of the positive stable steady state, so this stable steady state converges to $(1,1,0)$.
}
 \label{fig:eg_3d}
\end{figure}

We will use this family of differential equations to achieve arbitrary number of stable steady states. More precisely, if we want a network with exactly $m$ stable steady states, then we need to use the correct number of variables and the correct connectivity so that this conjunctive ODE has exactly $m$ stable steady states. In order to achieve this, we first need to be able to predict the number of stable steady states of a conjunctive network.

\section{Stable steady states of Conjunctive Networks}\label{sec:steady_states}

\subsection{Counting the number of stable steady states}

As observed in the previous section, the nonzero entries of stable steady states converge to 1 as $n$ increases. In the limit $n\rightarrow \infty$, the interaction between variables become Boolean functions, and then there is a one-to-one correspondence between the stable steady states of differential equations and stable steady states of the corresponding Boolean networks (which are binary strings). Such results have been proven in \cite{Veliz:BNODE2012,veliz2014piecewise} not only in the limit of $n\rightarrow \infty$, but also for finite $n$. For our purposes, the precise statement is the following.

\begin{theorem}\label{thm:previous}
Consider a conjunctive network in $N$ variables with Hill coefficient $n$ (Eq.~\ref{eq:conj}). For large but finite $n$, each stable steady state $\mathbf{x}_n$ is in $[0,1]^N$ and satisfies $\lim_{n\rightarrow \infty}\mathbf{x}_n\in\{0,1\}^N$. Also, there is at most one stable steady state in each of the $2^N$ regions obtained by cutting $[0,1]^N$ with the $N$ hyperplanes $x_i=\theta$.
\end{theorem}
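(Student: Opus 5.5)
The plan is to argue directly from the structure of Eq.~(\ref{eq:conj}), relying on \cite{Veliz:BNODE2012,veliz2014piecewise} only as a template. Write $H_n(z)=z^n/(\theta^n+z^n)$ with $\theta=1/2$ and $F_i(\mathbf{x})=\prod_{k\in I_i}H_n(x_k)$.

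\textbf{Step 1: stable steady states lie in $[0,1]^N$.} Every steady state satisfies $x_i=F_i(\mathbf{x})$. Since $F_i$ takes values in $[0,1)$ (at least on the nonnegative orthant, which is the biologically relevant region and is invariant), we get $x_i\in[0,1)$. Hence every steady state lies in $[0,1]^N$.

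\textbf{Step 2: stable steady states have Boolean limits.} Fix $\delta>0$ small. On the set where every coordinate satisfies $|x_k-\theta|\ge\delta$, the function $H_n$ converges uniformly to the Heaviside step at $\theta$. Its derivative $H_n'$ converges uniformly to $0$, since $H_n'(z)\le (n/z)\,(z/\theta)^n$ for $z<\theta$, with a symmetric bound above $\theta$.

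\begin{itemize}
\item If all coordinates of a steady state $\mathbf{x}_n$ are $\delta$-away from $\theta$, then $x_i=F_i(\mathbf{x}_n)$ is within $o(1)$ of a product of $0$'s and $1$'s. Every limit point is therefore in $\{0,1\}^N$.
\item The Jacobian is $J=DF-I$, and $\|DF\|\to0$ on this set. So $J$ is close to $-I$, and each such steady state is hyperbolic and stable. By the implicit function theorem (or by contraction of $F$ in the region), it is the unique fixed point in that region. This gives the ``at most one per region'' claim for these regions.
\end{itemize}

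\textbf{Step 3: steady states near a threshold are unstable.} This is the main obstacle: excluding stable steady states with some coordinate in $(\theta-\delta,\theta+\delta)$. Suppose $x_j\approx\theta$ at a steady state.

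\begin{itemize}
\item Since $x_j=F_j$, we need $F_j\approx 1/2$. Because each factor is in $[0,1)$, at least one input $k\in I_j$ must have $H_n(x_k)$ bounded away from $0$ and $1$, which forces $x_k=\theta+O(1/n)$. Iterating, I get a chain $j\leftarrow k\leftarrow\cdots$ of threshold coordinates. By finiteness this chain closes into a cycle $C$ in the wiring diagram, all of whose coordinates are at threshold.
\item Along $C$, the derivatives $H_n'(x_k)$ are of order $n$. Meanwhile, all off-cycle factors entering $F_i$ for $i\in C$ are either saturated or themselves threshold values.
\item I would restrict attention to the principal submatrix of $DF$ on the threshold coordinates. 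It has nonnegative entries and is dominated by the entries of order $n$ along the cycle(s).
\item By Perron--Frobenius, this submatrix has a real eigenvalue of order $n^{|C|/|C|}\cdot c\to\infty$: the product of the cycle edge weights is of order $n^{|C|}$, so the spectral radius is at least of order $n$. The saturated coordinates contribute a block near $-I$ with small coupling.
\item A perturbation argument on the block structure then gives an eigenvalue of $J$ with large positive real part. The steady state is therefore unstable for large $n$.
\end{itemize}

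The delicate part is making this uniform: the threshold coordinates are only $\theta+O(1/n)$, and I must handle the case where products of several near-threshold factors give $F_j\approx1/2$. I would first try bounding each factor of $F_j$ below by $1/2$, so that either one factor is intermediate or all are near $1$. This would reduce the problem to the single-intermediate-input case.

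\textbf{Step 4: conclusion.} Combining Steps 2 and 3, every stable steady state eventually lies in a region that is $\delta$-away from all thresholds. It is unique there and converges to a Boolean vector. Since $\delta$ is arbitrary, this yields both claims for $n\ge n_0$.
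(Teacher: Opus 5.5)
Your proposal is correct in substance, and it takes a genuinely different route from the paper. The paper gives no argument of its own for this statement: it imports it from \cite{Veliz:BNODE2012,veliz2014piecewise}, which are general results comparing Hill-type ODEs with their Boolean/piecewise-linear limits. You instead prove it directly for conjunctive networks in two parts:
\begin{itemize}
\item a contraction of $F$ on the $\delta$-shrunken boxes, which gives at most one fixed point per box, hyperbolic stability, and Boolean limit points;
\item an instability argument for equilibria with a coordinate within $\delta$ of $\theta$, obtained by tracing intermediate inputs back to a cycle and applying Perron--Frobenius to the nonnegative matrix $DF$.
\end{itemize}
Your route is self-contained and shows exactly where the AND/activation structure enters. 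Step 3 is false for general Hill networks: $\dot x=1-H_n(x)-x$ has a stable equilibrium exactly at $x=\theta$, with eigenvalue $-n/2-1$. The citation buys brevity and generality but hides this mechanism.

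Three points should be tightened.

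\textbf{The ``small coupling'' claim in the last bullet of Step 3.} It is only half true. Columns of $DF$ indexed by off-threshold coordinates do vanish as $n\to\infty$. But $\partial F_i/\partial x_k$, with $x_k$ at threshold and $x_i$ off threshold, is of order $n$. This occurs, e.g., if $\dot x_i=H_n(x_k)-x_i$ and $x_k$ is the symmetric near-threshold equilibrium of $\dot x_{1,2}=H_n(x_1)H_n(x_2)-x_{1,2}$, where $H_n(x_k)\approx 1/\sqrt2$. You do not need a perturbation argument at all. Since $DF\ge 0$ entrywise, $\rho(DF)$ is itself an eigenvalue of $DF$. For a cycle of length $m$ through $c$, one has $\rho(DF)^m=\rho(DF^m)\ge (DF^m)_{cc}\ge$ the product of the cycle weights. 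Hence $J$ has a real eigenvalue at least $\rho(DF)-1$.

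\textbf{Uniformity.} At a steady state, $\partial F_i/\partial x_k=(n/x_k)\,(1-H_n(x_k))\,x_i$ for $k\in I_i$. If $|x_j-\theta|<\delta$, the smallest factor of $F_j$ satisfies $H_n(x_k)\in\bigl(\tfrac12-\delta,(\tfrac12+\delta)^{1/M}\bigr)$, where $M$ is the maximal in-degree. So every cycle weight is at least $c\,n$, with $c$ depending only on $\delta$ and $M$. This removes the ``delicate part'' you were worried about.

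\textbf{Uniqueness in Step 2.} The implicit function theorem does not give uniqueness on a whole box. The contraction argument does (the box is convex and $\sup\|DF\|<1$), so rely on that.
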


For Example~\ref{eg:1d}, the theorem states that there is at most one stable steady state in $[0,\theta)$ and in $(\theta,1]$. We can see in Fig.\ref{fig:eg_1d}c that there is exactly one stable steady state in each interval. For Example~\ref{eg:2d}, there is at most one stable steady state in each of the four regions 
$[0,\theta)^2$, 
$[0,\theta)\!\times\! (\theta,1]$, 
$(\theta,1]\!\times\! [0,\theta)$, 
$(\theta,1]^2$. Indeed, in Fig.~\ref{fig:eg_2d}bc we can see that the region $[0,\theta)^2$ has the unique stable steady state (0,0), and $(\theta,1]^2$ has a unique stable steady state that converges to (1,1) as $n$ increases. The regions $[0,\theta)\!\times\! (\theta,1]$ and
$(\theta,1]\!\times\! [0,\theta)$ do not have any stable steady state. For Example~\ref{eg:3d}, there are 8 regions and from Fig.~\ref{fig:eg_3d} we see that only three of them have stable steady states: $[0,\theta)^3$ (note the blue trajectories), $(\theta,1]^3$  (note the cyan trajectories), and $(\theta,1]^2\times [0,\theta)$  (note the yellow trajectories).

We will now relate the structure of the wiring diagram with the number of stable steady states. To do that we need the following terminology. 

The wiring diagram of a conjunctive network is called \emph{strongly connected} if for any pair of vertices $i,j$, there is a directed path from $i$ to $j$. For example, the wiring diagrams in Fig.~\ref{eg:1d}a and Fig.~\ref{eg:2d}a are strongly connected. On the other hand, the wiring diagram in Fig.~\ref{eg:3d}a is not strongly connected, since there is no directed path from 3 to 1 (or from 3 to 2).

Using Theorem~\ref{thm:previous}, we prove the following.

\begin{theorem}\label{thm:sc}
Suppose the wiring diagram of  a conjunctive network is strongly connected. For large but finite $n$, there are exactly two stable steady states: the zero stable steady state and a positive stable steady state that approaches $(1,\ldots,1)$ as $n$ increases.
\end{theorem}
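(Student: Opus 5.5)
The plan is to reduce the count to the Boolean limit via Theorem~\ref{thm:previous}, and then establish existence and stability of the two candidate states directly for large finite $n$.

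\textbf{Step 1 (Boolean fixed points).} In the limit, the conjunctive network becomes the Boolean network $x_i=\bigwedge_{k\in I_i}x_k$. By Theorem~\ref{thm:previous}, every stable steady state of the ODE converges to a point of $\{0,1\}^N$. That point must be a fixed point of this Boolean network: if the limit is $b$ and $\mathbf{x}_n\to b$, then each Hill factor tends to $b_k$. This uses that entries near $0$ or $1$ are bounded away from $\theta=1/2$.

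\textbf{Step 2 (strong connectivity).} Suppose a fixed point $b$ has some $b_j=0$. Take any $i$. Strong connectivity gives a path $j=v_0\to v_1\to\cdots\to v_r=i$. Since $v_{s}\in I_{v_{s+1}}$, the equation $b_{v_{s+1}}=\bigwedge_{k\in I_{v_{s+1}}}b_k$ forces $b_{v_{s+1}}=0$ whenever $b_{v_s}=0$. Inducting along the path gives $b_i=0$. Hence the only Boolean fixed points are $\mathbf{0}$ and $\mathbf{1}$. The degenerate case $N=1$ with $I_1=\emptyset$ does not arise, since strong connectivity is meant with every vertex having an input; I would note this assumption explicitly. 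Combined with Theorem~\ref{thm:previous}, which allows at most one stable steady state per region, this gives at most one stable steady state near $\mathbf{0}$ and at most one near $\mathbf{1}$, so at most two in total.

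\textbf{Step 3 (existence and stability).} For $\mathbf{0}$: it is a steady state, and the Jacobian there is $-I$, because every Hill derivative vanishes at $0$ for $n\ge2$. So $\mathbf{0}$ is stable.

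For the positive state, restrict to the box $B=[3/4,1]^N$. On $B$ each Hill factor is at least $(3/4)^n/((1/2)^n+(3/4)^n)=1/(1+(2/3)^n)$. Hence $F_i\ge (1+(2/3)^n)^{-N}>3/4$ for large $n$, so $B$ is forward invariant, and Brouwer gives a steady state $\mathbf{x}_n\in B$. Each entry then satisfies $x_i=F_i(\mathbf{x}_n)\ge(1+(2/3)^n)^{-N}\to1$, so $\mathbf{x}_n\to\mathbf{1}$.

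For stability, the derivative of $z^n/(\theta^n+z^n)$ is $n\theta^n z^{n-1}/(\theta^n+z^n)^2$. For $z\ge3/4$ this is at most $n(2/3)^n\cdot(4/3)\to0$. So the Jacobian is $-I+E$ with $\|E\|\to0$, all eigenvalues have negative real part, and the steady state is asymptotically stable and unique in $B$.

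\textbf{Main obstacle.} The delicate point is Step 1: justifying that a limit point of stable steady states must be a Boolean fixed point, uniformly over large $n$. I would first try to derive it from the form of the equations, since any steady state satisfies $x_i=F_i(\mathbf{x})$ and Hill factors converge uniformly to step functions away from $\theta$. If that is not enough, I would fall back on the cited results \cite{Veliz:BNODE2012,veliz2014piecewise}, which give the correspondence with Boolean stable states.
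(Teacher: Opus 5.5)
Your proof is correct and follows essentially the paper's route: $\mathbf{0}$ is stable because the Jacobian there is $-I$, Brouwer on a box $[K,1]^N$ with uniformly small Jacobian gives the positive stable state, and uniqueness comes from propagating lowness along paths of the strongly connected graph combined with the at-most-one-per-region part of Theorem~\ref{thm:previous}. The only real difference is that you propagate zeros in the Boolean limit, while the paper propagates ``below threshold'' directly at finite $n$ (if $x_i<\theta$ its Hill factor is below $1/2$, so $x_j=F_j(\mathbf{x})<\theta$ for every $j$ with $i\in I_j$), which makes your ``main obstacle'' unnecessary; in the other direction, your explicit bound $x_i\ge(1+(2/3)^n)^{-N}$ proves convergence to $\mathbf{1}$ without appealing to Theorem~\ref{thm:previous}.
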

\textbf{Proof.}
Consider the function $F=(F_1,\ldots,F_N)$ given by $F_i(\mathbf{x})=\prod_{k\in I_i} \frac{x_k^n}{\theta^n+x_k^n}$ and note that Eq.~\ref{eq:conj} is simply $\mathbf{x}'=F(\mathbf{x})-\mathbf{x}$. For $\mathbf{x}$ to be a stable steady state we need to show that $F(\mathbf{x})=\mathbf{x}$ and that all eigenvalues of the Jacobian matrix of $F(\mathbf{x})-\mathbf{x}$ have negative real part.

Since $\frac{d}{dz} \frac{z^n}{\theta^n+z^n}=\frac{n\theta^n z^{n-1}}{(\theta^n+z^n)^2}$, it follows that the Jacobian matrix of $F(\mathbf{x})$, $J(\mathbf{x})$, is given by
\[ 
J_{ij}(\mathbf{x}) = 
\begin{cases}
0 , \text{ if $j \notin I_i$,}\\
%
\displaystyle\frac{n\theta^n x_j^{n-1}}{(\theta^n+x_j^n)^2} \prod_{k\in I_i\setminus \{j\}} \frac{x_k^n}{\theta^n+x_k^n} , \text{ if $j\in I_i$}.
\end{cases}
\]

Then, the Jacobian matrix of $F$ at the zero vector is the zero matrix. Since $F(\textbf{0})=\textbf{0}$ and the Jacobian matrix of $F(\mathbf{x})-\mathbf{x}$ is $-I$ (the identity matrix), we have that the zero vector is a stable steady state.

We now prove that there is a positive stable steady state. First, note that for any $\theta<K<1$, $F$ converges uniformly to $(1,\ldots,1)\in [K,1]^N$ on the compact set $[K,1]^N$ as $n\rightarrow \infty$. Then, for large but finite $n$, it follows that $F\left([K,1]^N\right)\subseteq [K,1]^N$ and therefore there is a point $\mathbf{x}^*\in[K,1]^N$ such that $F(\mathbf{x}^*)=\mathbf{x}^*$. Second, note that $J(\mathbf{x})$ converges to the zero matrix uniformly on $[K,1]^N$. Thus, the Jacobian matrix of $F(\mathbf{x})-\mathbf{x}$ converges uniformly to $-I$ on the set $[K,1]^N$. Then, it follows that this positive vector, $\mathbf{x}^*$, is a stable steady state. Furthermore, since $\mathbf{x}^*\in [K,1]^N$, Theorem~\ref{thm:previous} implies that $\mathbf{x}^*$ approaches $(1,\ldots,1)$ as $n$ increases. We remark that for this part of the proof we did not need the wiring diagram to be strongly connected (this will be used in Lemma~\ref{lemma:ss_one}).

Now, we will use the fact that the wiring diagram is strongly connected to prove that there are no other stable steady states. Suppose that $\mathbf{x}$ is a stable steady state which by Theorem~\ref{thm:previous} is in $[0,1]^N$. We assume that $\mathbf{x}\notin (\theta,1]^N$ and will show that $\mathbf{x}=(0,\ldots,0)$. Since $\mathbf{x}\notin (\theta,1]^N$, one of the entries, say the $i$-th entry, will be in $[0,\theta)$ and then $\frac{x_i^n}{\theta^n+x_i^n}<0.5$. For any $j$ such that $i\in I_j$, $x_j=F_j(\mathbf{x})=\prod_{k\in I_j} \frac{x_k^n}{\theta^n+x_k^n}<0.5=\theta$ (since one of the factors is $\frac{x_i^n}{\theta^n+x_i^n}$). Then, for any $j$ such that $i\in I_j$, we have $x_j\in[0,\theta)$. Repeating this process, it follows that if there is a path from $i$ to $k$, then $x_k\in[0,\theta)$. Since the wiring diagram is strongly connected, this process covers all vertices of the wiring diagram. Thus, $\mathbf{x}\in[0,\theta)^N$. Since the zero vector is already a stable steady state in $[0,\theta)^N$, using Theorem~\ref{thm:previous} have $\mathbf{x}=(0,\ldots,0)$. This finishes the proof. \hfill $\square$

Using Theorem~\ref{thm:sc} we can explain why Examples \ref{eg:1d} and \ref{eg:2d} had 2 stable steady states only; their wiring diagrams are strongly connected. On the other hand, the wiring diagram of Example~\ref{eg:3d} is not strongly connected, so Theorem~\ref{thm:sc} does not apply. However, we see that the wiring diagram has 2 strongly connected components (maximal strongly connected subgraphs), namely $G_1=\{(1,2),(2,1)\}$ and $G_2=\{(3,3)\}$. 

We now study the number of stable steady states in the case that the wiring diagram is not strongly connected, but is composed of many strongly connected components. 

\begin{example}\label{eg:main}
Consider the conjunctive network with wiring diagram given in Figure~\ref{fig:eg_main}a. The wiring diagram is not strongly connected, but is composed by several strongly connected components as well as edges between them. The graph obtained by replacing each strongly connected component by a vertex, and all edges between 2 strongly connected components by a single edge, forms a partially ordered set (see Figure~\ref{fig:eg_main}b).
\end{example}

\begin{figure}[h]
\centerline{ \hbox{ 
 \includegraphics[width=8cm]{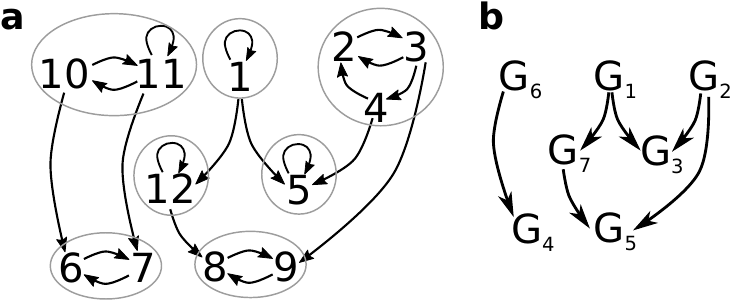}
}}\vspace{-7pt}
 \caption{
Conjunctive network with 12 variables. 
(\textbf{a})
Wiring diagram with strongly connected components indicated by circles. 
(\textbf{b}) 
Connectivity of the strongly connected components results in a partially ordered set.
}
\label{fig:eg_main}
\end{figure}

The main idea behind our approach is that for any stable steady state, $\mathbf{x}$, entries in a strongly connected components are either all zero or all positive. Also, if entries in a strongly connected component are zero, then entries in a strongly connected component ``downstream'' the wiring diagram will also be zero. Indeed, this behavior is seen in Example~\ref{eg:3d}. This means that the connectivity between strongly connected components (Fig.~\ref{fig:eg_main}b) will determine the number of stable steady states. We first need the following lemma.

\begin{lemma}\label{lemma:ss_one}
For any conjunctive network, there is a stable steady state such that each entry is positive and converges to 1 as the Hill coefficient increases.
\end{lemma}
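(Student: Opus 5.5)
The plan is to reuse, essentially verbatim, the second paragraph of the proof of Theorem~\ref{thm:sc}. That argument never used strong connectivity, as the remark there says. We write Eq.~\ref{eq:conj} as $\mathbf{x}'=F(\mathbf{x})-\mathbf{x}$ with $F_i(\mathbf{x})=\prod_{k\in I_i}\frac{x_k^n}{\theta^n+x_k^n}$, where an empty product equals $1$. We then fix any $K$ with $\theta<K<1$ and work on the compact cube $[K,1]^N$.

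\textbf{Step 1 (invariance and existence).} On $[K,1]^N$ each factor satisfies $1\geq \frac{x_k^n}{\theta^n+x_k^n}\geq \frac{K^n}{\theta^n+K^n}=\frac{1}{1+(\theta/K)^n}$, and this lower bound tends to $1$ since $\theta/K<1$. Because each $I_i$ has at most $N$ elements, $F_i\geq \left(1+(\theta/K)^n\right)^{-N}\to 1$ uniformly. Hence for large but finite $n$ we have $F([K,1]^N)\subseteq[K,1]^N$. Brouwer's fixed point theorem then gives $\mathbf{x}^*\in[K,1]^N$ with $F(\mathbf{x}^*)=\mathbf{x}^*$. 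All entries of $\mathbf{x}^*$ are at least $K>0$, so it is positive. Variables with $I_i=\emptyset$ simply satisfy $x_i^*=1$ and cause no trouble.

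\textbf{Step 2 (stability).} Using the formula for the Jacobian $J(\mathbf{x})$ from the proof of Theorem~\ref{thm:sc}, each nonzero entry is bounded by $\frac{n\theta^n x_j^{n-1}}{(\theta^n+x_j^n)^2}\leq \frac{n\theta^n}{x_j^{n+1}}\leq \frac{n}{K}(\theta/K)^n$ on $[K,1]^N$, because the remaining factors are at most $1$. This bound tends to $0$ as $n\to\infty$, so $J\to 0$ uniformly on the cube. The Jacobian of the vector field at $\mathbf{x}^*$ is $J(\mathbf{x}^*)-I$. By Gershgorin's theorem (or any matrix-norm bound), its eigenvalues lie within distance $\|J(\mathbf{x}^*)\|<1$ of $-1$, so they have negative real part. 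Thus $\mathbf{x}^*$ is an asymptotically stable steady state.

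\textbf{Step 3 (limit).} Theorem~\ref{thm:previous} says $\lim_{n\to\infty}\mathbf{x}^*_n\in\{0,1\}^N$, and every entry is at least $K>\theta$, so the limit must be $(1,\ldots,1)$. We can also see this directly: repeating Steps 1--2 with $K$ arbitrarily close to $1$ shows $\mathbf{x}^*_n\in[K,1]^N$ for all large $n$. Uniqueness from Theorem~\ref{thm:previous} ensures these are the same stable steady state, namely the unique one in $(\theta,1]^N$. This gives the convergence.

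The main obstacle is the mild point in Step 3: the fixed point obtained from Brouwer's theorem depends on $n$ and might a priori jump between different fixed points. We handle this with the uniqueness part of Theorem~\ref{thm:previous}, which guarantees at most one stable steady state in $(\theta,1]^N$, together with the explicit uniform bounds above.
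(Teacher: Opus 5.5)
Your proposal is correct and follows the paper's proof exactly: the paper simply observes that the first part of the proof of Theorem~\ref{thm:sc} (invariance of $[K,1]^N$ under $F$, Brouwer, uniform convergence of the Jacobian to zero on the cube, and Theorem~\ref{thm:previous} for the limit) never used strong connectivity. You add the explicit bounds $(1+(\theta/K)^n)^{-N}$ and $\frac{n}{K}(\theta/K)^n$, the Gershgorin step, and the uniqueness-in-$(\theta,1]^N$ argument for the $n$-dependence of the fixed point, all of which the paper leaves implicit.
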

\textbf{Proof.}
Note that the first part of the proof of Theorem~\ref{thm:sc} did not need the wiring diagram to be strongly connected. Thus, there is a positive stable steady state that approaches $(1,\ldots,1)$.
 \hfill $\square$

For simplicity we assume from now on that each strongly connected component has at least one edge so that ``trivial'' strongly connected components are not present. As we will show in the next subsection, such an assumption does not limit our ability to achieve arbitrary number of stable steady states.

{
Before stating the next theorem we need the following definition. Suppose $G_1,G_2,\ldots,G_r$ are the strongly connected components of a wiring diagram. We say that a collection of strongly connected components, $\{G_{k_1},\ldots,G_{k_l}\}$, is an \emph{antichain} if there is no directed path between such components. For example, in Fig.~\ref{fig:eg_main}, $ \{G_2,G_7\}$ is an antichain; $\{G_6,G_7\}$ is an antichain as well. We can see antichains as strongly connected components that do not influence each other. Given an antichain $\{G_{k_1},\ldots,G_{k_l}\}$, we say that $i$ (or the $x_i$ variable) is influenced by this antichain if $i$ can be reached from some node in one of the components in the antichain. For example, for the antichain $\{G_6,G_7\}$, we have that the variables influenced by it are $x_6,x_7,\ldots,x_{12}$. 
}

\begin{theorem}\label{thm:main}
Let $G_1,G_2,\ldots,G_r$ be  the strongly connected components of the wiring diagram of a conjunctive network. Let $\{G_{k_1},\ldots,G_{k_l}\}$ be an antichain. Then, there is a stable steady state $\mathbf{x}^*$ such that $x^*_i=0$ for all variables influenced by the antichain and $x^*_i>0$ for all other $i$.  Furthermore, every stable steady state is of this form.
\end{theorem}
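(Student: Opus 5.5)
Let $Z$ be the set of variables influenced by the antichain $\{G_{k_1},\ldots,G_{k_l}\}$, and let $P$ be its complement. Existence and the characterization are proved separately; existence carries most of the work.

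\textbf{Existence.} Set $x_i=0$ for $i\in Z$. The set $\{\mathbf{x}: x_i=0 \text{ for } i\in Z\}$ is invariant. Indeed, if $i\in Z$, then $I_i\neq\emptyset$: either $i$ lies in a component of the antichain, and these are nontrivial by assumption, or $i$ has an in-neighbour on a path from the antichain. That in-neighbour also lies in $Z$, since $Z$ is closed downstream. Hence the factor $x_k^n/(\theta^n+x_k^n)=0$ appears in $F_i$, and $F_i(\mathbf{x})=0$.

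Next we check that $P$ is closed upstream. If $j\to i$ and $j\in Z$, then $i\in Z$; contrapositively, every input of a variable in $P$ lies in $P$. So the restriction of Eq.~\eqref{eq:conj} to the variables in $P$ is itself a conjunctive network, and it does not depend on $Z$.

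By Lemma~\ref{lemma:ss_one}, this subnetwork has a stable steady state $\mathbf{y}^*$ with all entries positive and close to $1$, lying in some $[K,1]^{|P|}$. Define $\mathbf{x}^*=(\mathbf{y}^*,\mathbf{0})$.

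For stability, order the coordinates as $(P,Z)$. Because $P$ receives no input from $Z$, the Jacobian of $F$ is block lower triangular: the $PZ$ block vanishes. The $PP$ block is the Jacobian of the subnetwork, whose eigenvalues give negative real parts after subtracting $I$. For the $ZZ$ block, each entry $J_{ij}$ with $j\in Z$ carries the factor $x_j^{n-1}=0$ when $n\geq 2$. It also carries the factors of the other inputs, and for $i\in Z$ at least one of those vanishes or the derivative term does. So the $ZZ$ block is $0$, and the eigenvalues of $J-I$ there equal $-1$.

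Therefore $\mathbf{x}^*$ is a stable steady state with the required sign pattern.

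\textbf{Every stable steady state has this form.} Let $\mathbf{x}$ be a stable steady state. By Theorem~\ref{thm:previous}, each entry lies in $[0,\theta)$ or $(\theta,1]$.

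First, the low set is closed downstream. This is the propagation argument from Theorem~\ref{thm:sc}: if $x_i<\theta$ and $i\in I_j$, then $x_j=F_j(\mathbf{x})<\theta$. In particular, each strongly connected component is entirely low or entirely high.

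Let $L$ be the set of low variables and take the minimal components contained in $L$, meaning those with no path into them from another low component. These minimal components form an antichain $A$.

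Every low variable is reachable from some minimal low component. To see this, follow ancestors in the condensation poset within $L$. Conversely, everything reachable from $A$ is low. Hence $L$ is exactly the set of variables influenced by $A$, and the high variables form the complement.

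It remains to show that the low entries are exactly $0$. By the first part, the steady state $\mathbf{x}^*$ built from $A$ lies in the same region cut out by the hyperplanes $x_i=\theta$. The uniqueness clause of Theorem~\ref{thm:previous} then gives $\mathbf{x}=\mathbf{x}^*$. This yields both the zero pattern and the positivity of the other entries.

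\textbf{Expected obstacle.} The delicate step is stability of $\mathbf{x}^*$. It requires a careful block-triangular argument, and in particular a check that the $ZZ$ block of the Jacobian really vanishes when $x_j=0$ for $n>1$. Combining the subnetwork's large-$n$ threshold with the rest of the argument is less delicate, because the number of antichains is finite, so we may take the maximum of the finitely many $n_0$'s.
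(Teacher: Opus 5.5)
Your proof is correct. Its existence half is the paper's argument: delete the variables influenced by the antichain, apply Lemma~\ref{lemma:ss_one} to the remaining upstream-closed subnetwork, and pad with zeros. You also verify two things the paper only asserts. The padded vector is a steady state, and the Jacobian of $F(\mathbf{x})-\mathbf{x}$ there is block lower triangular with diagonal blocks $J_{PP}-I$ and $-I$.

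The converse is where you take a different route. The paper propagates exact zeros: $x_i=0$ and $i\in I_j$ force $x_j=0$. This gives the sign pattern directly, for any steady state and without Theorem~\ref{thm:previous}. But it leaves implicit that the zero set is the influence set of its minimal components. It also says nothing about whether one antichain could carry two distinct stable steady states with the same zero pattern, for instance a second one with some positive entries in $(0,\theta)$.

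You instead propagate lowness as in the proof of Theorem~\ref{thm:sc} and extract the minimal low components as the antichain $A$. You then invoke the one-per-region clause of Theorem~\ref{thm:previous} to get $\mathbf{x}=\mathbf{x}^*(A)$ exactly. This leans on Theorem~\ref{thm:previous}, both to exclude entries equal to $\theta$ and for uniqueness. In return it proves the stronger fact that every stable steady state \emph{is} the constructed one. That is exactly what the one-to-one correspondence in Theorem~\ref{thm:antichain} requires, and the paper's one-line converse does not supply it by itself.
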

\textbf{Proof.}
Suppose that $G_{k_1},\ldots,G_{k_l}$ are strongly connected components such that there is no edge between them. 

Consider the conjunctive network that results from removing all variables that are influenced by  $G_{k_1},\ldots,G_{k_l}$ (Fig.~\ref{fig:idea_proof_main}). By Lemma~\ref{lemma:ss_one}, this smaller conjunctive network will have a stable steady state with positive entries only, $\mathbf{y}^*$, which converges to $(1,\ldots,1)$. Then, we can use $\mathbf{y}^*$ to construct a stable steady state of the full conjunctive network by completing with zeros. This stable steady state satisfies the required conditions.

To show that every stable steady state satisfies this condition, it is enough to note that if $x_i=0$, then $x_j=0$ for any $j$ such that there is an edge from $i$ to $j$. 
 \hfill $\square$

\begin{figure}[h]
\centerline{ \hbox{ 
 \includegraphics[width=8cm]{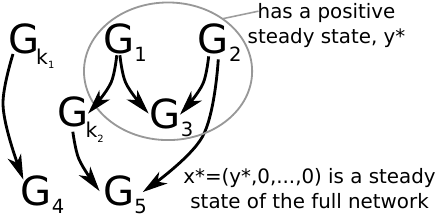}
}}\vspace{-7pt}
 \caption{
Idea for the proof of Theorem~\ref{thm:main} for strongly connected components $G_{k_1}=G_6$ and $G_{k_2}=G_7$. 
First, we remove $G_{k_1}$, $G_{k_2}$, $G_4$ (since there is an incoming edge from $G_{k_1}$), and $G_5$ (since there is an incoming edge from $G_{k_2}$). Then, using Lemma~\ref{lemma:ss_one}, the smaller conjunctive network with strongly connected components $G_1,G_2,G_3$ will have a positive stable steady state. Finally, we construct a stable steady state of the full conjunctive network by using zeros for the entries corresponding to $G_{k_1}$, $G_{k_2}$, $G_4$, and $G_5$.
}
\label{fig:idea_proof_main}
\end{figure}

\textbf{Example \ref{eg:main} (cont.) } If we use Theorem~\ref{thm:main} on the conjunctive network given in Fig.~\ref{fig:eg_main}a considering the strongly connected components $G_6$, $G_7$, we obtain that there is a stable steady state of the form $(x_1^*,x_2^*,x_3^*,x_4^*,x_5^*,0,0,0,0,0,0,0)$ with the first 5 entries positive. Furthermore, this stable steady state converges to $111110000000$ (parenthesis omitted for brevity). With this in mind, from now on we say that a stable steady state $\mathbf{x}^*$ \emph{has the form} $\mathbf{z}\in\{0,1\}^n$ if $sign(x_i)=z_i$ for each $i$, where $sign$ is the sign function. For example, the stable steady state $(x_1^*,x_2^*,x_3^*,x_4^*,x_5^*,0,0,0,0,0,0,0)$ has the form $111110000000$.

Theorem~\ref{thm:main} guarantees that there is a one to one correspondence between stable steady states of a conjunctive network and the antichains of the partially ordered set of strongly connected components. We restate this as a theorem.

\begin{theorem}\label{thm:antichain}
Suppose $G_1,\ldots,G_r$ are the strongly connected components of the wiring diagram of a conjunctive network. Then, there is a one to one correspondence between the stable steady states of the conjunctive network and the antichains of the partially ordered set of strongly connected components. 
\end{theorem}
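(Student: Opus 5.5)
The plan is to build the bijection explicitly from Theorem~\ref{thm:main}. Let $A=\{G_{k_1},\ldots,G_{k_l}\}$ be an antichain (the empty antichain allowed). Send $A$ to the stable steady state $\Phi(A)$ given by Theorem~\ref{thm:main}, whose zero set is exactly $Z(A)$, the set of variables influenced by $A$. For $\Phi$ to be a well-defined map into the set of stable steady states, uniqueness must be settled first. Any two stable steady states with zero set $Z(A)$ lie in the same orthant cell: zero entries are in $[0,\theta)$, and by the construction and by Theorem~\ref{thm:previous} the positive entries converge to $1$, hence lie in $(\theta,1]$ for large $n$. Theorem~\ref{thm:previous} then gives at most one stable steady state per cell. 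I expect this to be the delicate step. One must rule out a stable steady state that is positive in some coordinate yet lies below $\theta$ there. I would handle it with the limit statement of Theorem~\ref{thm:previous}: each such entry tends to $0$ or $1$. An entry tending to $0$ but positive would force, through the AND-gate at that node, some upstream entry also below threshold. Following the in-edges backwards and using finiteness, one reaches a strongly connected component whose entries are all below $\theta$. On that component the product structure then forces every factor to be $O(2^{-n}\cdot)$, so the only fixed point there is $0$, which is a contradiction.

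Injectivity comes from showing that $Z(A)$ determines $A$. I would recover $A$ as the set of minimal strongly connected components (in the poset order, upstream $\le$ downstream) among those contained in $Z(A)$. Each $G_{k_j}\subseteq Z(A)$ holds because each component has at least one edge, so every node of it is reached from a node of it. Any other component inside $Z(A)$ is reached from some $G_{k_j}$ and so is not minimal, while the $G_{k_j}$ are pairwise incomparable and therefore all minimal. Hence $A\neq A'$ implies $Z(A)\neq Z(A')$, and so $\Phi(A)\neq\Phi(A')$.

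Surjectivity is the ``furthermore'' part of Theorem~\ref{thm:main}. Given a stable steady state $\mathbf{x}$, let $Z$ be its zero set. By the argument in Theorem~\ref{thm:sc} applied componentwise (propagating $x_i<\theta$ along edges, together with the previous paragraph), each component lies entirely in $Z$ or entirely outside it, and $Z$ is closed downstream. Let $A$ be the set of minimal components contained in $Z$. This $A$ is an antichain, and $Z=Z(A)$, since every component in $Z$ lies downstream of a minimal one and $Z$ is downstream-closed. Uniqueness then gives $\mathbf{x}=\Phi(A)$, which completes the bijection.
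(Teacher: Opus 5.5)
Your proposal is correct and follows the paper's route: the paper treats this theorem as a direct restatement of Theorem~\ref{thm:main}, and your existence, injectivity (recovering the antichain as the minimal components inside the zero set) and surjectivity steps are exactly what that restatement implicitly relies on. The paper never spells out that two stable steady states with the same zero set must coincide, and you correctly supply this via the one-per-cell clause of Theorem~\ref{thm:previous}. Your exclusion of positive sub-threshold entries works, though it can be done without the limit statement: $\{i: x_i<\theta\}$ is downstream-closed (hence a union of nontrivial components), and on each such component $m=\max x_i$ satisfies $m\le m^n/(\theta^n+m^n)$, which forces $m=0$.
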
 

\textbf{Example \ref{eg:main} (cont.) } Consider the conjunctive network given in Fig.~\ref{fig:eg_main}a. By complete enumeration we see that the partially order set (Fig.~\ref{fig:eg_main}b) has 30 antichains; namely: 
$ \emptyset$,
$\{G_1\}$, $\{G_2\}$, $\{G_3\}$, $\{G_4\}$, $\{G_5\}$, $\{G_6\}$, $\{G_7\}$,
$\{ G_1,G_2 \}$,  $\{ G_1,G_4 \}$,  $\{ G_1,G_6 \}$,  
$\{ G_2,G_4 \}$, $\{ G_2,G_6 \}$, $\{G_2,G_7\}$, 
$\{G_3,G_4\}$, $\{G_3,G_5\}$, $\{G_3,G_6\}$, $\{G_3,G_7\}$,
$\{G_4,G_5\}$,  $\{G_4,G_7\}$,
$\{G_5,G_6\}$,
$\{G_6,G_7\}$,
$\{G_1,G_2,G_4\}$, $\{G_1,G_2,G_6\}$,
$\{G_2,G_4,G_7\}$, $\{G_2,G_6,G_7\}$,
$\{G_3,G_4,G_5\}$, $\{G_3,G_4,G_7\}$, $\{G_3,G_5,G_6\}$, $\{G_3,G_6,G_7\}$. Thus, Theorem~\ref{thm:antichain} guarantees that the conjunctive network has exactly 30 stable steady states. Furthermore, using the elements of each antichains, we know the form of these stable steady states (see Table~\ref{tab:ss} in Appendix). Note that this result is independent of the Hill coefficient $n$, as long as it is large enough. 

We finish this subsection with a proposition which will simplify the generation of conjunctive networks with arbitrary number of stable steady states.

\begin{proposition}\label{prop:disj}
Suppose that the wiring diagram of $f$ consists of the disjoint union of 2 directed graphs, $W_1$ and $W_2$. Denote with $g$ and $h$ the conjunctive networks corresponding to $W_1$ and $W_2$, respectively. Then, the number of stable steady states of $f$ is the product of the number of stable steady states of $g$ and the number of stable steady states of $h$.
\end{proposition}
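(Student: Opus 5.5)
The most direct route does not count antichains at all. It observes that $f$ decouples into $g$ and $h$, and then shows that stable steady states of $f$ are exactly the pairs $(\mathbf{y},\mathbf{z})$ of stable steady states of $g$ and of $h$. After reordering variables, write $\mathbf{x}=(\mathbf{y},\mathbf{z})$, where $\mathbf{y}$ collects the variables of $W_1$ and $\mathbf{z}$ those of $W_2$. Since there are no edges between $W_1$ and $W_2$, for each $i$ in $W_1$ the set $I_i$ lies in $W_1$, and similarly for $W_2$. Hence Eq.~\ref{eq:conj} splits as
\[
\mathbf{y}'=G(\mathbf{y})-\mathbf{y},\qquad \mathbf{z}'=H(\mathbf{z})-\mathbf{z},
\]
where $G$ and $H$ are the conjunctive maps of $g$ and $h$. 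Consequently $\mathbf{x}$ is a steady state of $f$ if and only if $\mathbf{y}$ and $\mathbf{z}$ are steady states of $g$ and $h$.

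\textbf{Step 1 (if).} Let $\mathbf{y}^*$ and $\mathbf{z}^*$ be stable steady states of $g$ and $h$. Take $\epsilon>0$ and choose $\delta_1,\delta_2$ from the definition of stability for $\mathbf{y}^*$ and $\mathbf{z}^*$, both with tolerance $\epsilon/\sqrt2$ and each small enough to lie inside the respective basin of attraction. Put $\delta=\min(\delta_1,\delta_2)$. If $|\mathbf{x}(0)-\mathbf{x}^*|<\delta$, then each block starts within its own $\delta_k$. Since the blocks evolve independently, each stays within $\epsilon/\sqrt2$ of its target and converges to it. Therefore $(\mathbf{y}^*,\mathbf{z}^*)$ is asymptotically stable for $f$.

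\textbf{Step 2 (only if).} Let $(\mathbf{y}^*,\mathbf{z}^*)$ be stable for $f$. I will show $\mathbf{y}^*$ is stable for $g$; the case of $\mathbf{z}^*$ is symmetric. Take initial conditions of the form $(\mathbf{y}(0),\mathbf{z}^*)$. The $\mathbf{z}$-component stays at $\mathbf{z}^*$ because it is a steady state of $h$. So the $f$-trajectory is $(\mathbf{y}(t),\mathbf{z}^*)$, where $\mathbf{y}(t)$ solves $g$, and its distance to $(\mathbf{y}^*,\mathbf{z}^*)$ equals $|\mathbf{y}(t)-\mathbf{y}^*|$. Both the $\epsilon$--$\delta$ condition and attractivity therefore transfer directly to $g$. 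Steps 1 and 2 together give a bijection between the stable steady states of $f$ and (stable steady states of $g$)$\times$(stable steady states of $h$), so the counts multiply.

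An alternative route uses Theorem~\ref{thm:antichain}. The strongly connected components of $W_1\sqcup W_2$ are those of $W_1$ together with those of $W_2$, with no comparabilities across the two. Hence antichains of the union are exactly unions $A_1\cup A_2$ of antichains of the two pieces, and the count factors again. This route needs the running assumption that components are nontrivial, whereas the decoupling argument does not, so I would present the decoupling argument and mention the antichain count as a consistency check.

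I expect no real obstacle. The only delicate point is Step 2: one must note that restricting to initial conditions with $\mathbf{z}(0)=\mathbf{z}^*$ is legitimate. It is, because the definition of stability quantifies over all nearby initial conditions, so it applies in particular to this subfamily.
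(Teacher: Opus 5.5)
Your proposal is correct and follows the same route as the paper. The paper's proof simply asserts that $\mathbf{x}$ is a stable steady state of $f$ if and only if $\mathbf{x}_{W_1}$ and $\mathbf{x}_{W_2}$ are stable steady states of $g$ and $h$, and your Steps 1 and 2 supply the verification of that equivalence which the paper leaves implicit: the $\epsilon/\sqrt{2}$ splitting in one direction, and the restriction to initial data with $\mathbf{z}(0)=\mathbf{z}^*$ in the other.
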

\textbf{Proof.}
Given $\mathbf{x}\in [0,1]^N$, denote $\mathbf{x}_{W_1}:=(x_i)|_{i\in W_1}$ and $\mathbf{x}_{W_2}:=(x_i)|_{i\in W_2}$. The proof now follows from the fact that $\mathbf{x}$ is a stable   steady state of $f$ if and only if $\mathbf{x}_{W_1}$ is a stable steady state of $g$ and $\mathbf{x}_{W_2}$ is a stable steady state of $h$.
 \hfill $\square$

\textbf{Example \ref{eg:main} (cont.) } We can use Proposition~\ref{prop:disj} in addition to Theorem~\ref{thm:antichain} to compute the number of stable steady states more efficiently. For example, consider the conjunctive network given in Fig.~\ref{fig:eg_main}a.
The conjunctive network corresponding to the strongly connected components $G_4$ and $G_6$ has 3 stable steady states (since the antichains are $\emptyset$, $\{G_4\}$, and $\{G_6\}$).
The conjunctive network corresponding to the other strongly connected components has 10 stable steady states (since the antichains are $\emptyset$, $\{G_1\}$, $\{G_2\}$, $\{G_3\}$, $\{G_5\}$, $\{G_7\}$, $\{G_1,G_2\}$, $\{G_2,G_7\}$, $\{G_3,G_7\}$, and $\{G_3,G_5\}$). Then, the full conjunctive network has $3\times 10=30$ stable steady states. 

\subsection{Achieving arbitrary number of stable steady states}
\label{sec:anynum}

Theorem~\ref{thm:antichain} allows us to study the dynamical problem of counting the number of stable steady states of conjunctive networks as the problem of counting the number of antichains in a partially ordered set. Therefore, in order to construct a differential equation with a desired number of stable steady states, it is sufficient to find a partially ordered set with that number of antichains. Then, each antichain can be replaced by a strongly connected component when constructing the conjunctive network. Furthermore, each strongly connected component can have a single vertex (with a self loop).

\begin{theorem}\label{thm:exist_conj_net}
For every integer $s\geq 1$, there exists a conjunctive network such that the number of stable steady states is $s$.
\end{theorem}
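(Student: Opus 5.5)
By Theorem~\ref{thm:antichain}, it suffices to exhibit, for each integer $s\geq 1$, a finite partially ordered set with exactly $s$ antichains (the empty antichain included). We then realize that poset as the condensation of a wiring diagram. Each element of the poset becomes a single vertex with a self loop. For every relation $a<b$ in the poset we add an edge from the vertex of $a$ to the vertex of $b$; it is enough to add the edges of the Hasse diagram. Since the poset is acyclic, the strongly connected components of this wiring diagram are exactly the single self-looped vertices. The reachability order among them is the given partial order, and every component has an edge, as assumed before Theorem~\ref{thm:main}. Theorem~\ref{thm:antichain} then says the conjunctive network has exactly as many stable steady states as the poset has antichains.

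The combinatorial construction goes as follows. A chain with $k$ elements has exactly $k+1$ antichains: the empty set and the $k$ singletons. So for $s\geq 2$ we take a chain of $s-1$ vertices $x_1\to x_2\to\cdots\to x_{s-1}$, each with a self loop, which gives $s$ stable steady states. The case $s=2$ is just Example~\ref{eg:1d}.

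The case $s=1$ is the one point that needs care, because any nonempty poset has at least two antichains ($\emptyset$ and a singleton). I would handle it directly with the network $x_1'=-x_1$, where $I_1=\emptyset$ and the empty product equals $1$. Here I must be careful: with the convention that the empty product is $1$, the equation is actually $x_1'=1-x_1$. It has the unique steady state $x_1=1$, which is globally asymptotically stable. Its wiring diagram is a single vertex with no edges, i.e.\ a trivial strongly connected component, so I would verify this case by hand rather than through Theorem~\ref{thm:antichain}. If one prefers to stay within the standing assumption, one can instead read $s=1$ as the degenerate network with $N=0$, or simply state that case separately.

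Optionally, Proposition~\ref{prop:disj} offers a more economical alternative. Taking disjoint unions multiplies the counts, so factoring $s$ and using one chain for each factor gives fewer vertices. This is not needed for existence.

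The main obstacle is to check that the poset-to-network realization preserves the order exactly. In particular, no extra strongly connected components may merge, and ``influenced by the antichain'' must coincide with the up-set generated by it. This holds because directed paths in the constructed digraph correspond exactly to order relations, since the edges come from the order itself.
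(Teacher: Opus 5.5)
Your proposal is correct and follows the paper's proof essentially verbatim. For $s\geq 2$, the chain $x_1\to\cdots\to x_{s-1}$ of self-looped vertices has exactly the $s$ antichains $\emptyset,\{G_1\},\ldots,\{G_{s-1}\}$, so Theorem~\ref{thm:antichain} gives $s$ stable steady states; the case $s=1$ is handled separately by $x_1'=1-x_1$ (empty product equal to $1$), which is where you ended up after correcting your initial $x_1'=-x_1$, whereas your general poset-realization step and the Proposition~\ref{prop:disj} refinement go beyond what this existence statement needs (the latter is what the paper uses in Theorem~\ref{thm:factors}).
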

\textbf{Proof.}
The proof of this theorem is constructive. 

We handle the case $s=1$ separately. It is enough to note that the 1-dimensional differential equation $x_1'=1-x_1$ has a unique stable steady state. This is a conjunctive network because it can be written as $\frac{d x_1}{dt}=\prod_{k\in \emptyset}\frac{x_k^n}{\theta^n+x_k^n}-x_1$.

For $s\geq 2$, note that the partially ordered set given in Fig.~\ref{fig:chain}a has exactly $s$ antichains, namely $\emptyset$, $G_1$, $G_2,\ldots,G_{s-1}$. Then, the conjunctive network with wiring diagram given in Fig.~\ref{fig:chain}b has $s$ stable steady states. This conjunctive network is given by
\[
\frac{dx_1}{dt}=\frac{x_1^n}{\theta^n+x_1^n}-x_1 
\text{ and } \]
\[
\frac{dx_i}{dt}=\frac{x_i^n}{\theta^n+x_i^n}\frac{x_{i-1}^n}{\theta^n+x_{i-1}^n}-x_i \text{ for $2\leq i \leq s-1$}.
\]
 \hfill $\square$

\begin{figure}[h]
\centerline{ \hbox{ 
 \includegraphics[width=8.3cm]{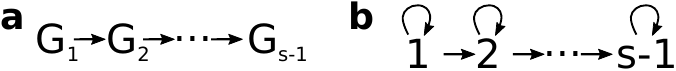}
}}\vspace{-7pt}
 \caption{
Construction of a conjunctive network with $s$ stable steady states. 
\textbf{(a)} Partially ordered set with $s$ antichains.
\textbf{(b)} Wiring diagram corresponding to the partially ordered set.
}
\label{fig:chain}
\end{figure}

\begin{example}
To construct a differential equation with 5 stable steady states it is sufficient to consider 
\begin{align*}
\frac{dx_1}{dt} & =  \frac{x_1^n}{\theta^n+x_1^n}-x_1\\
\frac{dx_2}{dt} & =  \frac{x_2^n}{\theta^n+x_2^n}\frac{x_1^n}{\theta^n+x_1^n}-x_2\\
\frac{dx_3}{dt} & =  \frac{x_3^n}{\theta^n+x_3^n}\frac{x_2^n}{\theta^n+x_2^n}-x_3\\
\frac{dx_4}{dt} & =  \frac{x_4^n}{\theta^n+x_4^n}\frac{x_3^n}{\theta^n+x_3^n}-x_4\\
\end{align*}
The stable steady states of this conjunctive network have the form 0000, 1000, 1100, 1110, and 1111.
\end{example}

\begin{example}\label{eg:s_6}
To construct a differential equation with 6 stable steady states we can use Theorem~\ref{thm:exist_conj_net}, which will require 5 variables. We can also combine Theorem~\ref{thm:exist_conj_net} with Proposition~\ref{prop:disj} to construct a conjunctive network with a smaller number of variables. More precisely, consider
\begin{align*}
\frac{dx_1}{dt} & =  \frac{x_1^n}{\theta^n+x_1^n}-x_1\\
\frac{dx_2}{dt} & =  \frac{x_2^n}{\theta^n+x_2^n}-x_2\\
\frac{dx_3}{dt} & =  \frac{x_3^n}{\theta^n+x_3^n}\frac{x_2^n}{\theta^n+x_2^n}-x_3.
\end{align*}
Since the wiring diagram consists of 2 disjoint graphs, one with 2 and the other with 3 in the corresponding partially ordered sets, then the number of stable steady states is $2\times 3 =6$.
\end{example}

\begin{figure}[h]
\centerline{ \hbox{ 
 \includegraphics[width=8cm]{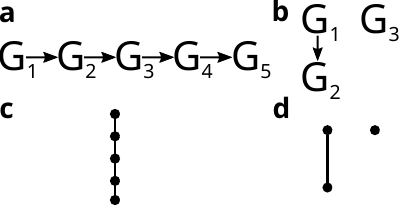}
}}\vspace{-7pt}
 \caption{
Construction of conjunctive network with $6$ stable steady states. 
\textbf{(a)} Partially ordered set with $6$ antichains, which can be used to construct the wiring diagram of a conjunctive network with 6 stable steady states.
\textbf{(b)} Three elements are sufficient to construct a partially ordered set with 6 antichains. Therefore, 3 variables are enough for conjunctive networks to achieve 6 stable steady states.
\textbf{(c,d)} Unlabeled representation of the partially ordered sets above. The direction of the arrows is top to bottom.
}
\label{fig:prod}
\end{figure}

In general, we have the following theorem that improves Theorem~\ref{thm:exist_conj_net} (we omit the easy case $s=1$).  

\begin{theorem}\label{thm:factors}
For every integer $s\geq 2$ with factorization $s=q_1\ldots q_l$, there exists a conjunctive network with $s$ stable steady states and $\sum_{i=1}^l(q_i-1)$ variables. Furthermore, this construction is sharp in the sense that for $s=2^l$, the smallest possible number of variables is $l=\sum_{i=1}^l(2-1)$.
\end{theorem}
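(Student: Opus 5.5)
For each factor $q_i\geq 2$ we build a separate block and then take the disjoint union of all the blocks. If $q_i=1$ it adds no variables, so we may discard it. For a factor $q_i\geq 2$ we use the chain network from the proof of Theorem~\ref{thm:exist_conj_net}. It has $q_i-1$ variables, each a single vertex with a self loop, and each variable other than the first also receives the previous one. Its strongly connected components form a chain $G_1\to\cdots\to G_{q_i-1}$. Any two elements of a chain are comparable, so the antichains are exactly $\emptyset$ and the $q_i-1$ singletons. By Theorem~\ref{thm:antichain} the block therefore has exactly $q_i$ stable steady states.

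Next we place the $l$ blocks side by side with no edges between them. Proposition~\ref{prop:disj} is stated for two components, so we apply it inductively on $l$: the union of the first $j$ blocks is itself a conjunctive network whose wiring diagram is disjoint from block $j+1$. This gives $q_1\cdots q_l=s$ stable steady states using $\sum_{i}(q_i-1)$ variables. Alternatively, Theorem~\ref{thm:antichain} applies directly: the poset of the union is the disjoint union of the chains, and its antichains are products of antichains of the pieces.

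For sharpness, take $s=2^l$. The factorization into $l$ factors equal to $2$ gives $l$ variables, namely $l$ disjoint self loops. It remains to show that fewer variables cannot produce $2^l$ stable steady states. With $N$ variables, Theorem~\ref{thm:previous} says there is at most one stable steady state in each of the $2^N$ regions cut out by the hyperplanes $x_i=\theta$. Hence there are at most $2^N$ stable steady states. Equivalently, the poset of strongly connected components has at most $N$ elements, and so at most $2^N$ antichains. So $2^l$ stable steady states force $N\geq l$.

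The main obstacle is this sharpness step. The bound must hold for conjunctive networks whose wiring diagrams contain trivial strongly connected components, and the antichain correspondence was derived under the assumption that no such components appear. Theorem~\ref{thm:previous}, however, applies to any conjunctive network, so we will use the region count rather than the antichain count to get the lower bound.
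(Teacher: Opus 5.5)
Your construction matches the paper's: chain blocks from Theorem~\ref{thm:exist_conj_net}, combined by Proposition~\ref{prop:disj} and induction. The difference is in the sharpness step.

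The paper bounds the number of stable steady states of a $k$-variable network by the number of antichains of its poset of strongly connected components, which is at most $2^k$. In doing so it implicitly applies Theorem~\ref{thm:antichain} to an arbitrary network. You instead use the region count of Theorem~\ref{thm:previous}, which places no hypothesis on the wiring diagram.

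Your concern is justified. With trivial components the antichain correspondence really does fail: $x_1'=1-x_1$ has one stable steady state, but its one-element poset has two antichains. So the paper's argument needs an extra remark that trivial components cannot increase the count, although the bound $2^k$ is still true.

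Your route is shorter and fully general. The paper's route keeps the lower bound in the poset language it uses for the rest of the construction.

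One step you should state explicitly: every stable steady state lies in one of the $2^N$ regions, i.e.\ no coordinate equals $\theta$. This follows for large $n$ from the first assertion of Theorem~\ref{thm:previous}, that the limits lie in $\{0,1\}^N$.
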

\textbf{Proof.}
For the first part of the proof, it is sufficient to note that for each $i=1,\ldots,l$, one can construct a conjunctive network with $q_i$ stable steady states using $q_i-1$ variables. Then, using Proposition~\ref{prop:disj} and induction imply that there is a conjunctive network in $\sum_{i=1}^l(p_i-1)$ variables and that it will have $q_1\ldots q_l=s$ stable steady states.

For the second part of the proof, suppose there is a conjunctive network with $k$ variables and $s=2^l$ stable steady states. Since the wiring diagram has $k$ vertices, then the partially ordered set of strongly connected components has at most $2^k$ antichains. Then, $2^l\leq 2^k$, so $l\leq k$. That is, at least $l$ variables are needed. 
 \hfill $\square$

{

 \subsection{Global long term behavior of conjunctive networks}

The previous theorems tell us about the existence of stable steady states. A potential limitation is that the basin of attractions of these steady states are simply small neighborhoods. Thus, we now provide results about non local behavior that show that the basins of attraction are not small. First we need the following notation.

For $\mathbf{x}\in [0,1]^N$, we can construct an interval along each dimension of $\mathbf{x}$ that corresponds to the qualitative value (low or high) of $x_i$, namely, we construct $[0,\theta-\epsilon]$ if $x_i<\theta$ and $[\theta+\epsilon,1]$ if $x_i>\theta$. The Cartesian product of these intervals results in a hypercube that we denote as $C_\epsilon(\mathbf{x})$. Note that as $\epsilon$ decreases, $C_\epsilon(\mathbf{x})$ approaches one of the $2^N$ regions obtained by cutting $[0,1]^N$ by the $N$ hyperplanes $x_i=\theta$.

With this notation, we state the results in \cite{Veliz:BNODE2012,veliz2014piecewise}  for our purposes. 
\begin{proposition}\label{prop:basin}
If $\mathbf{x}$ is a stable steady state of a conjunctive network, then, for large but finite $n$,  the basin of attraction contains $C_\epsilon(\mathbf{x})$.
\end{proposition}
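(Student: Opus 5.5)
We show that $C_\epsilon(\mathbf{x})$ is forward invariant and that on it the vector field is a uniform contraction. Any trajectory starting in $C_\epsilon(\mathbf{x})$ then converges to the unique fixed point there, which must be $\mathbf{x}$. Fix $\epsilon>0$ with $\theta+\epsilon<1$.

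First we identify the target. Since $\mathbf{x}$ is a stable steady state, Theorem~\ref{thm:main} (via Theorem~\ref{thm:antichain}) says it has a sign pattern $\mathbf{z}\in\{0,1\}^N$. Let $Z=\{i: z_i=0\}$. This set is closed under taking out-neighbours, i.e. it is the set of variables influenced by some antichain. Moreover $\mathbf{x}\to\mathbf{z}$ as $n\to\infty$, so for large $n$ the point $\mathbf{x}$ lies in the interior of $C_\epsilon(\mathbf{x})$. The relevant box is therefore $C:=\prod_{i\in Z}[0,\theta-\epsilon]\times\prod_{i\notin Z}[\theta+\epsilon,1]$.

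Next we prove forward invariance by checking the sign of $F_i(\mathbf{y})-y_i$ on each face of $C$. For $i\in Z$, the variable $i$ is either in a component of the antichain, and so has a self loop or an input in $Z$, or it is downstream of one. In both cases $I_i\cap Z\neq\emptyset$, because $Z$ is closed downstream and each nontrivial component has an internal edge. Hence for $\mathbf{y}\in C$ one factor of $F_i$ is at most $H(\theta-\epsilon)=\frac{(\theta-\epsilon)^n}{\theta^n+(\theta-\epsilon)^n}\to 0$, giving $F_i(\mathbf{y})<\theta-\epsilon$ for large $n$. On the upper face $y_i=\theta-\epsilon$ the derivative is therefore negative, and at $y_i=0$ we have $y_i'=F_i\geq 0$. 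For $i\notin Z$, every $k\in I_i$ satisfies $k\notin Z$: if $k\in Z$ then $i\in Z$ by downstream closure. So every factor is at least $H(\theta+\epsilon)\to 1$, and $F_i(\mathbf{y})\geq H(\theta+\epsilon)^{|I_i|}>\theta+\epsilon$ for large $n$. Thus on the lower face $y_i'>0$, and at $y_i=1$ we have $y_i'\le 0$. Since $C$ is a box and the field points inward on every face, $C$ is forward invariant.

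The third step, which we expect to be the main obstacle, is to get convergence and not merely invariance. As in the proof of Theorem~\ref{thm:sc}, the Jacobian $J(\mathbf{y})$ of $F$ converges uniformly to the zero matrix on $C$, because each entry contains either a derivative $H'(y_j)$ with $|y_j-\theta|\ge\epsilon$, or a factor $H(y_k)$ with $y_k\le\theta-\epsilon$. One has to check uniformity here, since $H'$ is evaluated away from $\theta$ and $\sup_{|z-\theta|\ge\epsilon}H'(z)\to0$; this bound is the key estimate. So for large $n$ we have $\|J\|_\infty\le 1/2$ on the convex set $C$. Take the $\infty$-norm log-norm of the Jacobian $J-I$ of the system: $\mu_\infty(J-I)\le\|J\|_\infty-1\le -1/2$. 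This gives $|\mathbf{u}(t)-\mathbf{v}(t)|_\infty\le e^{-t/2}|\mathbf{u}(0)-\mathbf{v}(0)|_\infty$ for any two trajectories in $C$. Taking $\mathbf{v}(t)\equiv\mathbf{x}\in C$, every trajectory starting in $C$ converges to $\mathbf{x}$. Hence $C_\epsilon(\mathbf{x})$ lies in the basin of attraction.

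Finally, we note that the threshold $n_0$ depends only on $\epsilon$ and $\max_i|I_i|$. We also remark that the argument can alternatively be quoted from \cite{Veliz:BNODE2012,veliz2014piecewise}.
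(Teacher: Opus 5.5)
Your argument is correct, but it is not the paper's route. The paper gives no proof of its own and simply imports this statement from \cite{Veliz:BNODE2012,veliz2014piecewise}, where basins of attraction are controlled for general steep sigmoidal (Boolean-like) ODEs.

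You instead give a self-contained argument tailored to AND gates. It extends the idea the paper uses for the positive steady state in Theorem~\ref{thm:sc}, namely uniform convergence of $F$ and of its Jacobian on $[K,1]^N$, to mixed boxes. The invariance comes from two structural facts: the zero set $Z$ is closed downstream, and $F_i$ is a product. One low input forces $F_i\le H(\theta-\epsilon)$, while all-high inputs force $F_i\ge H(\theta+\epsilon)^{|I_i|}$. Together these give $F(C)\subseteq C$ with the field pointing inward. The uniform bound $\sup_{|z-\theta|\ge\epsilon}H'(z)\to 0$ then gives $\|J\|_\infty\le 1/2$ on the convex box, hence an $\infty$-norm contraction. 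Your approach buys an explicit exponential rate, uniqueness of the fixed point in $C$, and explicit thresholds for the invariance and contraction steps. The citation buys generality beyond products of Hill functions.

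Two small caveats. First, placing $\mathbf{x}$ in $C$ (in particular $x_i\ge\theta+\epsilon$ for $i\notin Z$) still rests on the convergence $\mathbf{x}\to\mathbf{z}$ from Theorem~\ref{thm:previous}, i.e. on the cited results. This step is genuinely needed: otherwise the contraction would drive $C$ to a different fixed point. So your closing claim that $n_0$ depends only on $\epsilon$ and $\max_i|I_i|$ holds for invariance and contraction, but not for this identification step. Second, since $x_i=0$ for $i\in Z$, $\mathbf{x}$ lies on the boundary of $C$ rather than in its interior. This is harmless, because you only use $\mathbf{x}\in C$.
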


This proposition indicates that as $n$ increases, the basin of attraction of $\mathbf{x}$ is not simply a small neighborhood of $\mathbf{x}$, but it actually gets closer and closer to containing one of the $2^N$ regions mentioned before.

Since conjunctive networks are cooperative systems ($\frac{\partial F_i}{\partial x_j} \geq 0 $ for $i\neq j$), we also have the following global result \cite{Hirsch1976,Smith1995}.

\begin{theorem}\label{thm:global}
Up to a set of measure zero, every trajectory in a conjunctive network converges to one of its steady states.
\end{theorem}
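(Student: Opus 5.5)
The plan is to invoke Hirsch's generic convergence theorem for strongly monotone systems \cite{Hirsch1976,Smith1995}. That theorem says that for a cooperative, irreducible system whose forward orbits have compact closure, the set of points whose trajectories do not converge to the set of equilibria has measure zero. A second ingredient, finiteness of the set of steady states, will then upgrade ``converges to the set of equilibria'' to ``converges to one steady state''.

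\textbf{Step 1 (dissipativity).} The cube $[0,1]^N$ is forward invariant, since each product of Hill functions lies in $[0,1]$. Moreover, outside the cube every $x_i$ moves toward $[0,1]$ exponentially. Hence all forward orbits are bounded and eventually enter any neighborhood of $[0,1]^N$, which gives compact orbit closures.

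\textbf{Step 2 (cooperativity).} From the Jacobian computed in the proof of Theorem~\ref{thm:sc}, $\partial F_i/\partial x_j\geq 0$ for $j\neq i$ on the nonnegative orthant. So $\mathbf{x}'=F(\mathbf{x})-\mathbf{x}$ is cooperative and its flow is monotone.

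\textbf{Step 3 (the obstacle: irreducibility).} Hirsch's generic theorem needs a strongly monotone flow, which usually means the Jacobian is irreducible, i.e.\ the wiring diagram is strongly connected. This fails for a general wiring diagram. The off-diagonal entries also vanish on the boundary, where some $x_j=0$.

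I would first reduce to the strongly connected components, using the triangular (cascade) structure of the partial order of components from Theorem~\ref{thm:antichain}. On each component the subsystem, with its upstream inputs treated as given, is irreducible in the interior $(0,1]^N$, because the Hill derivatives are positive there. The boundary faces have measure zero and can be discarded.

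I would then argue inductively along a linear extension of the poset. An upstream component converges generically to an equilibrium, so a downstream component is an asymptotically autonomous cooperative irreducible system. Results on asymptotically autonomous systems (Markus/Thieme-type) transfer generic convergence from the limiting system. The delicate part is that the exceptional set of the limiting system must pull back to a measure-zero set for the full system. I would handle this by Fubini, slicing over the upstream coordinates.

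As an alternative, if this becomes messy, I would cite directly Smith's version of generic convergence for cooperative systems that need not be irreducible but have a triangular block structure \cite{Smith1995}.

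\textbf{Step 4 (finiteness).} To show that each generic trajectory tends to a single equilibrium, I would show that the equilibria are isolated for large $n$. The equation $F(\mathbf{x})=\mathbf{x}$ is real-analytic in $\mathbf{x}$. For large $n$, in each of the $2^N$ regions away from the hyperplanes $x_i=\theta$, $J$ is close to either $0$ or a matrix with controlled entries, so $J-I$ is nonsingular there. Near the hyperplanes I would use the framework of \cite{Veliz:BNODE2012,veliz2014piecewise} to get finitely many equilibria. With finitely many equilibria, an $\omega$-limit set that is connected and contained in the equilibrium set must be a single point. This completes the argument.
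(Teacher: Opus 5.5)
Your Steps 1--2 are the paper's entire argument: it observes that conjunctive networks are cooperative and cites Hirsch and Smith, nothing more. You are right that this does not literally cover a general wiring diagram. The generic-convergence theorems need strong monotonicity, i.e.\ irreducibility, and in their almost-everywhere form they give convergence only to the \emph{set} of equilibria. The paper addresses neither point, but your repair in Step 3 does not work as written.

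The Markus--Thieme theory does not transfer \emph{generic} convergence. Take a cascade $\mathbf{u}'=f(\mathbf{u})$, $\mathbf{v}'=g(\mathbf{u},\mathbf{v})$ with $\mathbf{u}(t)\to\mathbf{u}^*$. The theory only tells you that the $\omega$-limit set is a compact, internally chain-transitive set of the limit system $\mathbf{v}'=g(\mathbf{u}^*,\mathbf{v})$. To conclude that this set is a single equilibrium, you need control of \emph{every} chain-transitive set of the limit system: all orbits convergent, finitely many equilibria, no cyclic chains. That is a global statement, and Hirsch's theorem does not supply it; irreducible cooperative systems of dimension $\geq 3$ can carry nonconstant (unstable) periodic orbits.

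Fubini cannot bridge this. The limit system's exceptional set is a null set of \emph{its own} initial conditions. But the full trajectory through $(\mathbf{u}_0,\mathbf{v}_0)$ does not follow the limit orbit of $\mathbf{v}_0$, because the transient input moves $\mathbf{v}$ before $\mathbf{u}$ settles. You also have no map sending full initial data to the limit orbit they shadow, along which you could slice; this would be an asymptotic phase, or an invariant foliation with absolutely continuous holonomy. The usual source of such a foliation is normal hyperbolicity of the invariant set $\{\mathbf{u}=\mathbf{u}^*\}$. That requires the upstream contraction (rate about $1$ at the stable states here) to dominate the downstream rates, and near threshold equilibria those rates are of order $n$.

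Your fallback, a block-triangular generic-convergence theorem in Smith's monograph, is exactly where the whole difficulty lies, so it would have to be a precise, checkable statement. Reducible cooperative systems are generally not strongly order-preserving, so the main generic-convergence theorems there do not apply as stated.

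Step 4 is likewise asserted rather than proved where it matters. Away from the hyperplanes, $J\to 0$ uniformly, so equilibria there are nondegenerate (at most one per box, by contraction). The equilibria that could fail to be isolated sit near $x_i=\theta$, where entries of $J$ are of order $n$ and $J-I$ can be singular. The works behind Theorem~\ref{thm:previous} are used in the paper only for stable steady states in the open regions, so they do not cover these. Without finiteness (or isolation) of the unstable threshold equilibria, almost-everywhere quasi-convergence does not give convergence to a single steady state. As it stands, your argument, like the bare citation in the paper, is complete only for strongly connected wiring diagrams, and there only modulo the finiteness claim.
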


We remark how the results in this section complement each other. On one hand, Theorem \ref{thm:global} states that trajectories converge to steady states. This by itself doesn't tell us how many steady states there are or what their configuration is. Theorem \ref{thm:antichain} on the other hand tells us exactly what these steady states are. Note that in particular, this means that there is no chaotic or periodic behavior that could potentially disrupt the applicability of our results.

}

\section{Final Remarks: Minimal number of variables}

A simple modification of the proof in Theorem~\ref{thm:factors} results in a lower bound and upper bound for the minimal number of variables needed.
\begin{proposition}\label{prop:bounds}
Let $\mathcal{N}(s)$ denote the minimal number of variables needed to construct a conjunctive network with $s$ stable steady states. Then, for $s=q_1\ldots q_l\geq 2$, we have $\lceil \log_2(s) \rceil \leq \mathcal{N}(s) \leq \sum_{i=1}^l(q_i-1)$, where $\lceil \ \  \rceil$ denotes the ceiling function. These bounds are sharp in the sense that the three quantities are equal for $s=2^l$.
\end{proposition}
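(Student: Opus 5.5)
The plan mirrors the proof of Theorem~\ref{thm:factors}, proving each inequality separately and then checking the case $s=2^l$.

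\textbf{Upper bound.} Fix a factorization $s=q_1\cdots q_l$ with each $q_i\geq 2$. By the chain construction in Theorem~\ref{thm:exist_conj_net}, for each $i$ there is a conjunctive network with $q_i$ stable steady states using $q_i-1$ variables. Its wiring diagram is a path of self-looped vertices, and the poset of strongly connected components is a chain with $q_i-1$ elements. Take the disjoint union of these $l$ wiring diagrams. Applying Proposition~\ref{prop:disj} and induction on $l$, the combined network has $\prod_i q_i=s$ stable steady states and $\sum_{i=1}^l(q_i-1)$ variables. Since $\mathcal{N}(s)$ is a minimum over all networks with $s$ stable steady states, this gives $\mathcal{N}(s)\leq\sum_{i=1}^l(q_i-1)$.

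\textbf{Lower bound.} Take any conjunctive network with $k$ variables and $s$ stable steady states. By Theorem~\ref{thm:antichain}, $s$ equals the number of antichains of the poset of strongly connected components. There are $r\leq k$ components, since components partition the vertex set. Each antichain is a subset of the components, so there are at most $2^r\leq 2^k$ antichains. Hence $s\leq 2^k$, so $k\geq\log_2 s$, and since $k$ is an integer, $k\geq\lceil\log_2 s\rceil$. Minimizing over networks gives $\lceil\log_2(s)\rceil\leq\mathcal{N}(s)$.

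\textbf{Sharpness.} For $s=2^l$ with the factorization $q_1=\cdots=q_l=2$, we get $\lceil\log_2 s\rceil=l$ and $\sum_{i=1}^l(2-1)=l$. The two bounds then force $\mathcal{N}(s)=l$, so all three quantities coincide.

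\textbf{Main obstacle.} The one delicate step is the counting in the lower bound. It must be clear that the number of antichains is at most the number of subsets of components, and that the number of components is at most the number of variables. The paper also assumes that every component has at least one edge. Trivial components only make the count of antichains smaller or leave it unchanged, and the bound $2^r\leq 2^k$ holds whatever the assumption, so the inequality is unaffected.
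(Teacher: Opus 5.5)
Your proposal is correct and follows essentially the paper's route: the upper bound is the chain-plus-disjoint-union construction of Theorem~\ref{thm:factors}, the lower bound is $s\le 2^{k}$ from counting antichains (with integrality giving the ceiling), and sharpness comes from the factorization of $2^l$ into $2$'s. One small tightening: when trivial components are present, Theorem~\ref{thm:antichain} no longer directly counts the stable steady states, so the cleanest justification of $s\le 2^k$ is Theorem~\ref{thm:previous}, which gives at most one stable steady state in each of the $2^k$ regions. Your claim that trivial components ``only make the count of antichains smaller'' is too loose to do this job.
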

\textbf{Proof.}
The inequality $\mathcal{N}(s) \leq \sum_{i=1}^l(q_i-1)$ is simply a restatement of Theorem~\ref{thm:factors}. To prove $\log_2(s) \leq \mathcal{N}(s)$. It is enough to note that since $\mathcal{N}(s)$ is the smallest number of variables, the number of stable steady states, $s$, must satisfy $s\leq 2^{\mathcal{N}(s)}$. 
 \hfill $\square$

We now look at function $\mathcal{N}$ in more detail.
Given $s$, consider the conjunctive network in $N=\lceil \log_2(s) \rceil$ variables given by $\frac{dx_i}{dt} =  \frac{x_i^n}{\theta^n+x_i^n}-x_i$ for $i=1,\ldots,N$. The wiring diagram of this network has $N$ variables, and the corresponding partially ordered set has $2^N$ antichains and therefore the network has $2^N$ stable steady states. Since $2^N \geq s$ and connecting strongly connected components reduces the number of stable steady states (since it reduces the number of  antichains), it could be possible that with $N=\lceil \log_2(s) \rceil$ variables, and with the correct connectivity, one can achieve exactly $s$ stable steady states. This is indeed true for $s=2,3,4,5,6$ as Fig.~\ref{fig:posets_2-6} shows. For these values of $s$, it is enough to have $N=\lceil \log_2(s) \rceil$ variables. However, this is not valid in general. For example, although $\lceil \log_2(7) \rceil =3$, 3 variables are not sufficient to achieve 7 stable steady states. By exhaustive search it can be shown that $\mathcal{N}(7)=4$, Fig.~\ref{fig:posets_7}.

\begin{figure}[h]
\centerline{ \hbox{ 
 \includegraphics[width=8.3cm]{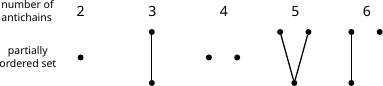}
}}\vspace{-7pt}
 \caption{
Minimal partially ordered sets for  $s=2,3,4,5,6$. 
For these values of $s$, $\mathcal{N}(s)=\lceil \log_2(s) \rceil$. The partially ordered sets can be seen as ordered top to bottom or bottom to top. The partially ordered sets for $s=2,3,$ and $6$ correspond to Examples~\ref{eg:1d}, \ref{eg:3d}, and \ref{eg:s_6}(Fig.~\ref{fig:prod}b), respectively.
}
\label{fig:posets_2-6}
\end{figure} 

\begin{figure}[h]
\centerline{ \hbox{ 
 \includegraphics[width=4cm]{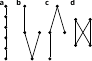}
}}\vspace{-7pt}
 \caption{
Partially ordered sets that result in a network with 7 stable steady states.  (a) Partially ordered set obtained using Theorem~\ref{thm:main}. (b,c,d) Using exhaustive search we find the smallest partially ordered sets that result in a network with 7 stable steady states. Then,  $\mathcal{N}(7)=4$. Note that (b) is the reflection or mirror image of (c) (top-bottom).
}
\label{fig:posets_7}
\end{figure}

In light of Proposition~\ref{prop:disj}, if $s=q_1q_2$ one can find the minimal conjunctive networks to achieve $q_1$ and $q_2$ and use the disjoint union of their wiring diagram to potentially create the minimal conjunctive network needed to achieve $s$ stable steady states. That is, one can hypothesize $\mathcal{N}(q_1q_2)=\mathcal{N}(q_1)+\mathcal{N}(q_2)$. This is indeed true for all composite numbers $s\leq 32$, but not valid in general. For example, since $\mathcal{N}(3)=2$ and $\mathcal{N}(11)=5$, we can construct a conjunctive network with 7 variables that has 33 stable steady states, but 7 is not minimal, since $\mathcal{N}(33)=6$.

\begin{figure}[h]
\centerline{ \hbox{ 
 \includegraphics[width=5cm]{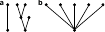}
}}\vspace{-7pt}
 \caption{
Partially ordered sets that result in a network with 33 stable steady states. (a) Using Proposition~\ref{prop:disj} we can consider a network consisting of 2 disconnected networks with 3 and 11 stable steady states using partially ordered sets having 2 and 5 elements, respectively. Then, the full partially ordered set will have 7 elements and the corresponding network will have $3*11=33$ stable steady states. (b) However, 7 variables is not minimal, since one can achieve 33 stable steady states with 6 variables using the partially ordered set shown. Since 6 is minimal, $\mathcal{N}(33)=6$. 
}
\label{fig:posets_33}
\end{figure} 

Computing $\mathcal{N}(s)$ is equivalent to finding the minimal number of elements a partially ordered set so that it has $s$ antichains. This related combinatorial problem has been studied in the context of topologies having $s$ open sets \cite{RAGNARSSON2010138,erne1991counting} (also see \url{oeis.org/A137813}). Although OEIS A137813 contains the value of $\mathcal{N}(s)$ for several values of $s$, it does not include the partially ordered sets, which are needed to construct the conjunctive networks. We used exhaustive search and \cite{poset_website} to find the minimal conjunctive networks that achieve a desired number of stable steady states for $s=2,\ldots,100$ (see link in Appendix for the partially ordered sets).



\section{Conclusion}

Understanding how networks can achieve complex dynamics is a problem that arises in areas such as gene and neural networks. In synthetic biology in particular, it is of interest to be able to achieve dynamical properties such as multistationarity using networks that can be potentially feasible to construct. Although bistability has been achieved using the so-called toggle switch \cite{gardner2000construction}, there is no  methodology that provides concrete networks capable of achieving any desired multistationarity. One difficulty is that it is not trivial to predict the stable steady state behavior of arbitrary differential equations.

A possible solution is to couple simple interactions or modules and use the global structure of the network to achieve multistationarity. AND gates have been studied and designed in the lab, and we showed that by coupling AND gates one can achieve any desired number of stable steady states by choosing the correct network structure. Specifically, we showed that the number of antichains is equal to the number of stable steady states. Furthermore, for any desired number of stable steady states, $s$, we provide a constructive way to design a network with that many stable steady states. This approach also permits to find the minimal number of variables needed to achieve any desired number of stable state states. 

In the context of modularity, our results show that using small modules (single-variable networks with self loops) in very particular combinations (determined by the partially ordered set), one can achieve complex behavior such as multistationarity. 
{
Furthermore, combining our results with existing results about cooperative systems, we can guarantee that (up to a set of measure zero) all trajectories converge to the steady states predicted by our results. 
}

\bibliographystyle{abbrv}

\section*{Appendix}

\begin{table}[h]
\label{tab:ss}
\begin{tabular}{cc}
Antichain & Form of the stable steady state\\
$ \emptyset$ & 1\ 111\ 1\ 11\ 11\ 11\ 1\\
$\{G_1\}$ &  0\ 111\ 0\ 11\ 00\ 11\ 0\\
$\{G_2\}$ &  1\ 000\ 0\ 11\ 00\ 11\ 1\\ 
$\{G_3\}$ &  1\ 111\ 0\ 11\ 11\ 11\ 1\\ 
$\{G_4\}$ &  1\ 111\ 1\ 00\ 11\ 11\ 1\\ 
$\{G_5\}$ &  1\ 111\ 1\ 11\ 00\ 11\ 1\\ 
$\{G_6\}$ &  1\ 111\ 1\ 00\ 11\ 00\ 1\\ 
$\{G_7\}$ &        1\ 111\ 1\ 11\ 00\ 11\ 0\\
$\{ G_1,G_2 \}$ &  0\ 000\ 0\ 11\ 00\ 11\ 0\\  
$\{ G_1,G_4 \}$ &  0\ 111\ 0\ 00\ 00\ 11\ 0\\  
$\{ G_1,G_6 \}$ &  0\ 111\ 0\ 00\ 00\ 00\ 0\\  
$\{ G_2,G_4 \}$ &    1\ 000\ 0\ 00\ 00\ 11\ 1\\ 
$\{ G_2,G_6 \}$ &    1\ 000\ 0\ 00\ 00\ 00\ 1\\ 
$\{G_2,G_7\}$ &      1\ 000\ 0\ 11\ 00\ 11\ 0\\ 
$\{G_3,G_4\}$ &      1\ 111\ 0\ 00\ 11\ 11\ 1\\ 
$\{G_3,G_5\}$ &      1\ 111\ 0\ 11\ 00\ 11\ 1\\ 
$\{G_3,G_6\}$ &      1\ 111\ 0\ 00\ 11\ 00\ 1\\ 
$\{G_3,G_7\}$ &      1\ 111\ 0\ 11\ 00\ 11\ 0\\
$\{G_4,G_5\}$ &      1\ 111\ 1\ 00\ 00\ 11\ 1\\  
$\{G_4,G_7\}$ &      1\ 111\ 1\ 00\ 00\ 11\ 0\\
$\{G_5,G_6\}$ &      1\ 111\ 1\ 00\ 00\ 00\ 1\\
$\{G_6,G_7\}$ &      1\ 111\ 1\ 00\ 00\ 00\ 0\\
$\{G_1,G_2,G_4\}$ &  0\ 000\ 0\ 00\ 00\ 11\ 0\\ 
$\{G_1,G_2,G_6\}$ &  0\ 000\ 0\ 00\ 00\ 00\ 0\\ 
$\{G_2,G_4,G_7\}$ &  1\ 000\ 0\ 00\ 00\ 11\ 0\\ 
$\{G_2,G_6,G_7\}$ &  1\ 000\ 0\ 00\ 00\ 00\ 0\\
$\{G_3,G_4,G_5\}$ &  1\ 111\ 0\ 00\ 00\ 11\ 1\\ 
$\{G_3,G_4,G_7\}$ &  1\ 111\ 0\ 00\ 00\ 11\ 0\\ 
$\{G_3,G_5,G_6\}$ &  1\ 111\ 0\ 00\ 00\ 00\ 1\\ 
$\{G_3,G_6,G_7\}$ &  1\ 111\ 0\ 00\ 00\ 00\ 0
\end{tabular}
\caption{stable steady states of conjunctive network given in Fig.~\ref{fig:eg_main}. The spacing between numbers indicates different strongly connected components.}
\end{table}

The list of the minimal partially ordered sets  is available at \url{github.com/alanavc/minimal-posets}.

\end{document}